\newtheorem{theorem}{Theorem}
\newtheorem{lemma}{Lemma}
\newtheorem{corollary}{Corollary}
\newcommand{\ket}[1]{\lvert #1 \rangle}
\newcommand{\abs}[1]{\lvert #1 \rvert}
\newcommand{\size}[1]{\lvert #1 \rvert}
\newcommand{\id}{\mathrm{Id}}
\newcommand{\C}{\mathbb{C}}
\DeclareMathOperator{\poly}{\mathrm{poly}}
\DeclareMathOperator{\diag}{diag}
\newcommand{\suppress}[1]{}
\newcommand{\Z}{{\mathbb Z}}
\newcommand{\F}{{\mathbb F}}
\begin{document}

\title{An exact quantum hidden subgroup algorithm and 
applications to solvable groups}

\author{
Muhammad Imran
\\
Department of Algebra, Institute of Mathematics,
\\
Budapest University of Technology and Economics,
\\
M\H{u}egyetem rkp. 3.,
H-1111 Budapest, Hungary.
\\
E-mail: \texttt{mimran@math.bme.hu}
\and
G\'abor Ivanyos
\\
Institute for Computer Science and Control,
\\
E\"otv\"os Lor\'and Research Network, 
\\
Kende u. 13-17., H-1111 Budapest, Hungary.
\\
E-mail: \texttt{Gabor.Ivanyos@sztaki.hu}
}

\maketitle

\begin{abstract}
We present a polynomial time exact quantum  algorithm
for the hidden subgroup problem in $\Z_{m^k}^n$. The
algorithm uses the quantum Fourier transform modulo $m$
and does not require factorization of $m$. For smooth
$m$, i.e., when the prime factors of $m$ are of
size $(\log m)^{O(1)}$, the quantum Fourier
transform can be exactly computed using the method
discovered independently by Cleve and Coppersmith, 
while for general $m$, the algorithm of Mosca and Zalka
is available. Even for $m=3$ and $k=1$ our result appears to be new. 
We also present applications to compute the structure
of abelian and solvable groups whose order has the same (but possibly
unknown)  prime factors as $m$. The applications for solvable groups
also rely on an exact version
of a technique proposed by Watrous for computing
the uniform superposition of elements of subgroups.
%We also discuss an application for finding primitive
%elements of certain finite fields.
\end{abstract}

%\newpage
\section{Introduction}\label{sec:intro}

The first quantum algorithms 
that demonstrate
the advantage of quantum computers over classical ones,
such as the procedure proposed 
by Deutsch and Jozsa~\cite{DeutschJozsa},
are of exact nature, that is, they give the correct
solution with probability one. The speedup of
these though 
is
rather modest. Later 
Bernstein and Vazirani provided an oracle problem that can be 
solved in exact quantum polynomial time, but not in time 
$n^{O(\log n)}$ on any classical bounded-error probabilistic 
algorithm \cite{BerVaz}. Simon's problem  is a simpler
one, and it cannot be solved classically in sub-exponential time~\cite{Simon}.
However, Simon's polynomial time algorithm may fail
with a certain probability
(that can though be made arbitrarily small).
Shor's celebrated polynomial
time quantum algorithms for factoring integers and computing
discrete logarithms build on a generalization of the
Fourier sampling technique used by Bernstein and Vazirani, and Simon
also may fail with a small probability. As observed
by several authors, perhaps first by Kitaev~\cite{Kitaev}, 
Simon's problem and the main ingredients of Shor's  algorithms can 
be cast as instances of the {\em hidden subgroup problem} (HSP).
This is computing a subgroup from a function that is constant on the
cosets of the subgroup and takes distinct values on different cosets,
see the discussion of the main result for a formal definition.
(Note however that Kitaev's {\em stabilizer problem} is a slightly
less general framework).

As already mentioned, exact quantum algorithms return a
correct answer with certainty, while bounded-error methods may
give an incorrect answer with probability bounded by a constant less than $1/2$,
say $1/4$. Zero-error algorithms (also called Las Vegas type algorithms)
such as Simon's and Shor's methods, are required to either give the 
correct answer or report failure with probability at most, say, $3/4$.
Using independent repetition, the error or failure probability can be made 
arbitrarily small. 
Thus a polynomial time  Las Vegas type quantum algorithm can 
be interpreted as a correct algorithm that runs in polynomial time
in the probabilistic sense and in the unlucky case may take a very long time
to succeed, even may (with probability zero) never terminate.
In this respect, the relationship between exact
and "probabilistic" (bounded-error or zero-error) quantum algorithms is 
analogous to that between deterministic and randomized classical algorithms.
Like for the classical counterpart, investigating the power and limitations of exact 
quantum algorithms is a challenging problem. Even for 
various specific algorithmic questions, it would be interesting to know whether exact quantum
algorithms have similar advantage over classical ones as probabilistic
quantum methods. In this paper we address this question for the
the hidden subgroup problem in the additive group $\Z_m^n$ and give
an affirmative answer.

To our knowledge the only "interesting" infinite class of abelian 
groups for which
exact polynomial time quantum algorithms for the general hidden
subgroup problem have been known is $\Z_2^n$. (We consider $G$
interesting if no classical deterministic polynomial time algorithm is known
for the hidden subgroup problem in $G$. When the elementary abelian
subgroups of $G$ are either cyclic or of polynomial size and
the factorization of the order of $G$ is known 
then there is such a classical algorithm: a recursion based on 
determining the intersection of the hidden subgroup with the maximal elementary
abelian subgroups. On the other hand, one can adapt Simon's 
argument~\cite{Simon} to show
that the classical randomized query complexity of the HSP
in $\Z_p^n$ is $\Omega(p^{\frac{n-1}{2}}$.)
It may be worth mentioning that Ettinger, Hoyer and Knill~\cite{EHK04} gave an exact
quantum algorithm for the HSP in an arbitrary finite group that uses
a polynomial number of queries to the function hiding the subgroup. However,
the running time of that algorithm is exponential in general.
Note that the hidden
subgroup problem in the group $\Z_2^n$ includes Simon's problem
as a special case. The first exact
polynomial time hidden subgroup algorithm in $\Z_2^n$
is due to Brassard and Hoyer~\cite{BraHoy}. 
Their method combines Simon's algorithm with a version
of Grover's technique~\cite{Grover} for quantum search.
Later Mihara and Sung
proposed~\cite{MihSung} a simplified version of the Brassard--Hoyer algorithm
for the HSP in $\Z_2^n$ based on a slightly different approach.
%(Unfortunately, the problem actually solved in~\cite{MihSung} differs from 
%Bthe standard HSP in that their algorithm applies -- unnecessarily, in our view -- 
%some unusual oracles.)
Recently, Bonnetain in~\cite{BonnetainSimon}
devised an improved variant that follows a similar structure.
Mosca and Zalka~\cite{MosZal} proposed an exact polynomial
time solution to the discrete logarithm problem in a cyclic
group of known, but not necessarily smooth order $m$. This problem can be cast as
a special instance of the hidden subgroup problem in $\Z_m\times \Z_m$.
The method is based on the exact quantum
Fourier transform algorithm modulo $m$ proposed in the same paper.

In~\cite{BraHoy} the authors propose an approach 
to extend their result to other abelian
groups $G$, e.g., to $G=\Z_m^n$, provided that two problems
become efficiently solved.
One of them
is just the exact computation of the
quantum Fourier transform modulo $m$. 
At the time of publishing \cite{BraHoy},
this was known for smooth $m$, see the manuscripts
of Cleve \cite{Cleve} and Coppersmith \cite{Coppersmith}.
The exact quantum Fourier transform proposed by Mosca and
Zalka~\cite{MosZal}
solves this problem for general $m$. The other
problem is
 constructing a Boolean function on $G$ with
certain properties that support amplitude amplification to
obtain new information about the hidden subgroup with certainty. 
To our knowledge, apart from the case $m=2$, no such construction 
has been published although the authors of~\cite{BraHoy} were 
planning to investigate the issue for groups of smooth order.
In this paper, using ideas from the discrete
logarithm algorithm of Mosca and Zalka, we actually solve a
modified version of the problem for general $m$: we construct 
a function with the desired properties, though on 
$G\times \{0,1\}$, rather than on $G$.

See the remarks at the end 
of Subsection~\ref{subsec:HSP_basic} for details.

\paragraph*{Models for exact quantum computations.}
Bernstein and Vazirani define complexity classes
related to exact quantum computing 
(such the class EQP)
based on quantum Turing machines 
in \cite{BerVaz}. 
Unfortunately, in contrast to bounded error quantum computing,
the power of this model turns to be quite restricted for 
the purposes of exact versions of Fourier sampling based approaches. 
Indeed, as it is 
pointed out by Nishimura and Ozawa in \cite{NishOzaQTM}, 
the quantum Fourier transform $QFT_m$ modulo $m$ can be implemented by a 
single Turing machine only for a {\em finite} set of integers $m$.
Therefore, we consider more powerful models based on {\em quantum circuits}. 

We refer the reader to the paper \cite{NishOzaUnif}
by Nishimura and Ozawa for the details of the definition
of uniform quantum circuit families. Here we provide a brief
summary. A quantum circuit family
is a sequence of circuits built from a given set of elementary 
gates, called the basis. Uniformity means that the circuit 
$C_s$, used for input length $s$, is required to be 
constructible
by a deterministic Turing machine 
in time polynomial in the size of $C_s$ including also that
the numerical parameters of the gates should be computable by
a classical deterministic algorithm in time polynomial in
the circuit size and a prescribed precision. Uniform quantum circuit families
with finite bases are equivalent to quantum Turing machines at least
for polynomial time computations, see \cite{NishOzaPerfect}. 
Cleve \cite{Cleve} and Coppersmith \cite{Coppersmith} noticed
that the quantum Fourier transform $QFT_m$ of $\Z_m$ for smooth $m$
(that is, when all the prime factors of $m$ are polynomially small)
can be exactly computed by an infinitely based
 uniform circuit family of polynomial size. 
Later, Mosca and Zalka \cite{MosZal} proposed a method for
implementing $QFT_m$ exactly by a circuit family 
of size polynomial in $\log m$. That family is uniform in
the sense that the parameters of some gates in the circuit for 
$QFT_m$ can be efficiently (in time $\poly(\log m)$)
built knowing $m$ rather than the input size which is essentially 
$\log m$. This model is apparently stronger than the uniform circuit 
family in the sense of Nishimura and Ozawa. However, in some
applications when $m$ is known to be a member of a set that
can be computed in time polynomial from the {\em size} 
of the input for the specific application, the Mosca-Zalka
implementation of $QFT_m$ can be used as an ingredient of 
a uniform circuit family in the strict sense. 
Besides smooth
numbers, these cases include numbers $m$ 
such as the order of 
the multiplicative group of a finite field
of small characteristic or, more generally,
the order of the general linear group
over such fields. 

We remark that Mosca \cite{MoscaMSRI} proposed an even more powerful model
in which it is  allowed to apply a gate 
$U_\alpha$ with parameter $\alpha$ when (a description of) $\alpha$
emerges {\em during} the computation. (To be more specific,
Mosca proposed allowing gates mapping $\ket{\alpha}\ket{x}$
to $\ket{\alpha}U_\alpha\ket{x}$.) Using that model, he successfully 
"derandomized" Shor's factoring algorithm. The 
square-free decomposition algorithm of
%Li, Peng, Du and Suter 
Li et al.~\cite{LPDSsqfree} 
also appears to require application
of certain gates with parameters that depend on the divisors
of the number $N$ to factor and hence are not known to be
classically computable from the input in polynomial time. (To be more specific,
they use $QFT_{d-1}$ to create the uniform superposition of 
integers between $1$ and $d-1$ where $d$ is a "random" divisor 
of $N$ popping up during the computation. Approaches using
amplitude amplification directly could also do the job, 
but all we are aware of need one or more gates with
parameters depending on $d$.) For the problems
considered in this paper, issues with emerging divisors
can be circumvented, therefore we only consider
the three models discussed above. In increasing order of
power, these are 
\begin{enumerate}
\item[(i)] finitely based uniform circuit families; 
\item[(ii)] infinitely based uniform circuit families; and
\item[(iii)] circuit families uniform in the sense that the parameters
of the gates are computed from the input.
\end{enumerate}
In (i) and (ii), uniformity is understood in the strict sense,
that is, parameters of the gates are computed from the input size. 

By a polynomial time 
quantum algorithm with oracles 
we mean a family of polynomial size 
circuits with oracles which,
apart from the oracles, is uniform in the strictest
sense like model (i), that is, 
the input size parametrizes the circuits, 
and the gates are from a finite set. When we do not specify
the number of invocations of the oracles in a statement,  
the number of calls is included in the bound for the 
circuit size. We state our results in a form
when the quantum Fourier transform is accessed via an oracle. This
makes it easier to derive the different implications
in the three models discussed above. Our algorithms will also
be purely unitary, that is, no measurements will be used. The output
will be a state in a separate register and will not be entangled with
the by-products of the computation. In most cases, the output
will be even classical. By that we mean that it is a computational
basis state corresponding to a binary string.

\paragraph*{The main result.}
In this paper, we present a polynomial time exact quantum algorithm
for the hidden subgroup problem in the additive group 
$\Z_{m^k}^n=\Z^n/m^k\Z^n$. Our method combines a
generalization of the Mihara--Sung variant of the Brassard--Hoyer 
algorithm (see also Bonnetain's version) with a solution of a variant
of the key problem in the approach proposed  by Brassard and Hoyer.

The solution is based on some ideas from the 
discrete logarithm procedure of Mosca and Zalka. 

Focusing on groups $\Z_{m^k}^n$ is not too restrictive,
as any abelian group of order having the same prime factors as $m$
is a factor of groups of this type. 
We will consider a slightly generalized version of the 
HSP which can be dealt with the same Fourier sampling based
technique. This generalization captures the application of
the {\em swap test} of Buhrman et al.~\cite{BCWW}
to testing equality of two quantum states 
provided that they are either equal or orthogonal. It
will also be useful for computations in groups. 
Watrous in~\cite{Watrous} also uses a similar generalization 
of Shor's order finding to computing orders of group elements
modulo normal subgroups.
Here is the definition of the generalized HSP.

Let
$f:\Z_{m^k}^n\rightarrow S(\C^{s})$ 
be a "quantum state valued" function (here
$S(\C^s)$ is the unit sphere in $\C^s$). To indicate that
the values of $f$ are {\em states},
we use the notation $\ket{f(x)}$. 
We say that $f$ hides
the subgroup $H\leq \Z_{m^k}^n$ if $\ket{f(x)}=\ket{f(y)}$ 
whenever $x+H=y+H$ while
$\ket{f(x)}$ and $\ket{f(y)}$ are orthogonal if $x+H\neq y+H$. We assume
that $f$ is given by an oracle $U_f$ that maps $\ket{x}\ket{0}$
to $\ket{x}\ket{f(x)}$. Here $\ket{0}$ is a fixed unit vector
in $\C^r$. (If $s=2^t$ then the state with all zero qubits is
a natural choice.) The task is to compute the hidden subgroup
$H$. 
Note
that in the original (and usual) definition of the hidden subgroup
problem, the values of $f$ are from a "classical"
set of $s$ elements. From an instance of that, by taking a bijection between the set
and a basis of $\C^s$ one easily obtains 
an instance of the generalized version. In this paper, we omit the adjective "generalized" and use the term
HSP for the generalized version.
Our main result is the following.

\begin{theorem}
\label{thm:HSP}
There is an exact quantum algorithm that solves
the hidden subgroup problem in $\Z_{m^k}^n$ in time
$(nk\log m)^{O(1)}$ using
$O(nk\log^2 m)$ calls to the oracle $U_f$ or its inverse, and
$O(n^2k\log^2 m)$ applications the quantum Fourier transform
$QFT_m$ of $\Z_m$ or its inverse. 
The procedure outputs the description
of the hidden subgroup $H$ by the  
matrix in Hermite normal form whose columns
are a basis of the lattice which is the inverse image of 
$H$ at the projection map $\Z^n\rightarrow \Z^n/{m^k}\Z^n$. 
If $m$ is a prime then the number of calls to the oracle $U_f$ or its inverse is
$O(nk)$, while the oracle for $QFT_m$ or its inverse is called $O(n^2k)$
times.
\end{theorem}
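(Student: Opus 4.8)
The plan is to reduce the exact \HSP{} to a deterministic ``find one new relation'' subroutine, iterate it a bounded number of times, and finish with classical integer linear algebra. I would work in the dual picture: a single round of Fourier sampling over $\Z_{m^k}^n$ produces, coherently, a state supported on the annihilator $H^\perp\le\Z_{m^k}^n$, and sampling it returns a character trivial on $H$. The bounded-error algorithm simply repeats this until the samples generate $H^\perp$; to make it exact I would instead maintain a \emph{certified} subgroup $L\le H^\perp$ spanned by the characters found so far (initially $L=0$) and design each round to output, deterministically, a character in $H^\perp\setminus L$ whenever $L\neq H^\perp$, and to certify $L=H^\perp$ otherwise. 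Since $H^\perp$ has a composition series of length at most $nk\log_2 m$ (and at most $nk$ when $m$ is prime), this many rounds suffice, which is the first of the two sources of the $\log m$ factors.

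The first ingredient is the sampling step itself, using only $QFT_m$ and no factorisation of $m$. I would handle the $\Z_{m^k}$ structure by processing the base-$m$ digits one level at a time, so that a round applies $QFT_m$ to each of the $n$ coordinates of a single digit-slice $\Z_m^n$, costing $O(n)$ calls to $QFT_m$, rather than assembling the full $QFT_{m^k}$, whose mixed-radix form needs controlled rotations by $m^k$-th roots of unity that lie outside any finite gate basis. The point to verify is that the cross-digit twiddle phases are supplied implicitly by the hiding oracle together with the coset structure, so that $QFT_m$ and finitely many auxiliary gates genuinely suffice; the $k$ digit-levels are then spread across the rounds, which is why $k$ enters linearly.

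The heart of the argument, and the main obstacle, is converting one probabilistic Fourier sample into a deterministic extraction of a genuinely new generator---the step Brassard and Hoyer could carry out only for $m=2$. One needs a marking predicate whose ``good'' subspace can be amplified to amplitude \emph{exactly} $1$. Following the Mihara--Sung reformulation and the phase-control idea of Mosca and Zalka, I would define the predicate not on $G=\Z_{m^k}^n$ but on $G\times\{0,1\}$: the $G$-part flags the characters lying in $H^\perp\setminus L$ (membership in $L$ is checked classically, while the sampled state is already supported on the relevant annihilator), and the extra bit carries a tunable partial rotation. This is exactly what exact amplitude amplification in the sense of Brassard--Hoyer--Mosca--Tapp requires: when the success probability $p$ is known, adjoining one qubit and a calibrated rotation replaces $p$ by a value for which an integer number of amplification steps rotates the good component to amplitude exactly $1$, with no measurement. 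Because amplitude $1$ is reached \emph{exactly}, the resulting state is a clean product of the new generator with uncomputable garbage, so the generator can be written to the output register and the remainder reversed, keeping the whole procedure unitary with classical output.

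What makes $p$ computable, and where the second $\log m$ enters, is the arithmetic of the index of $L$ in the next subgroup of the chain. For prime $m$ this index is a clean power of $m$, the per-round success probability is an explicit value such as $(m^j-1)/m^j$, and a single exact-amplification step suffices, giving $O(1)$ oracle calls and $O(n)$ calls to $QFT_m$ per round; over $O(nk)$ rounds this yields the prime-case bounds. For composite $m$ the index is only an unknown divisor of a power of $m$---unresolvable since we may not factor $m$---and accommodating this inside the exact amplification costs an amplification of depth $O(\log m)$ per round, hence $O(\log m)$ oracle and $O(n\log m)$ Fourier calls per round; combined with the $O(nk\log m)$ rounds this produces the stated $O(nk\log^2 m)$ and $O(n^2k\log^2 m)$. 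Finally, from the $O(nk\log m)$ generators of $H^\perp$ I would recover $H$ by deterministic integer linear algebra: form the preimage lattice in $\Z^n$ generated by these relations together with the vectors $m^k e_i$, and reduce it to Hermite normal form. This postprocessing is polynomial time and deterministic, so it preserves exactness and outputs the claimed HNF basis of the lattice projecting onto $H$.
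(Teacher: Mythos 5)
Your overall architecture (iteratively generating $H^\perp$, exact amplitude amplification with one extra qubit in the spirit of Mosca--Zalka, $O(\log m)$ attempts per round to absorb an unknown divisor of $m$, and Hermite normal form postprocessing) matches the paper, but the step you yourself call ``the heart of the argument'' has a genuine gap. You propose to mark the set $H^\perp\setminus L$ directly and to calibrate the success probability with a ``tunable partial rotation'' on the ancilla. Exact amplitude amplification requires the initial success probability to be known exactly, and $\lvert H^\perp\setminus L\rvert/\lvert H^\perp\rvert$ is not known: $\lvert H^\perp\rvert$ is precisely what the algorithm is trying to determine, and the index of $L$ in $H^\perp$ is an unknown divisor of a power of $m$. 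A calibrated rotation would therefore need a gate parameter depending on a quantity that only emerges at run time (exactly the model the paper is at pains to avoid), and running the amplification ``to depth $O(\log m)$'' does not repair this: iterating Grover-type steps from an unknown initial amplitude overshoots; it does not converge to amplitude $1$. The paper's resolution is structurally different and essential. It maintains a \emph{second} subgroup $K\le H$ and picks a test element $u\in L^\perp\setminus K$; the marking predicate is Boolean, defined on the value $(u,y)\in\Z_m$ (namely $(u,y)\ge m/2$, or $b=1$ and $0<(u,y)\le 2^j$), and the structure of the image $\mu_u(H^\perp)=d\Z/m\Z$ guarantees that for at least one $j\in\{-1,0,\dots,\lfloor\log_2 m\rfloor\}$ this predicate has measure exactly $1/2$ on $H^\perp\times\{0,1\}$, so only a Hadamard on the ancilla is needed and no tunable rotation at all. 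The companion subgroup $K$ is also what certifies termination: if every $j$ fails then $u\in H$ with certainty, $K$ grows, and $K=L^\perp$ is the stopping condition. Your proposal has no candidate element to test and hence no concrete mechanism to ``certify $L=H^\perp$.''

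Separately, your treatment of $\Z_{m^k}$ by digit slices rests on the claim that ``cross-digit twiddle phases are supplied implicitly by the hiding oracle,'' which is unjustified: the Fourier transform of $\Z_{m^k}$ is not a tensor power of $QFT_m$, and the hiding oracle has no access to the phases $e^{2\pi i/m^k}$. The paper avoids any transform over $\Z_{m^k}$ altogether. It maintains $H_0\le H$, forms $K_0=\{x: mx\in H_0\}$, builds a map $\phi_0:\Z_m^n\to K_0$ inducing a homomorphism onto $K_0/H_0$, and runs the $\Z_m^n$ algorithm on the composed function $f\circ\phi_0$; after at most $k$ such rounds $H_0=H$. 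Your classical linear algebra at the end and the round-count bookkeeping are fine, but without the $(u,y)$-based exactly-half predicate and the double $K$/$L$ bookkeeping, the core extraction step does not go through.
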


The numbers $m$, $n$ and $k$ are assumed to be explicitly given as input,
the latter two are in unary. Of course, the most relevant input is provided 
by the oracle $U_f$. 
A unique description of the output is helpful for avoiding
that it gets entangled with the garbage. The Hermite normal
form appears to be a good choice, as it
will also be useful in some of our
applications of the hidden subgroup algorithm.
For a constant $m$, e.g., the product of the first $100$
primes, the theorem gives an efficient exact solution of the HSP
in the strictest model (i) 
(equivalent to quantum Turing machines).
%To our knowledge, even this result is new.
For a smooth $m$, e.g., when it is input in unary, or it
is a product of primes of size polynomial in $n$ and $k$,
the theorem could be used in model (ii), that is 
in the context of uniform
circuits based on an infinite set of gates. For general $m$, through the
implementation of Mosca and Zalka~\cite{MosZal}, one obtains
a result in the strongest model (iii).
For $m=2^\mu-1$, input by $\mu$ in unary, or more generally,
when $m$ can be computed in deterministic polynomial time
from $[\log m]$ (counted in unary), 
application
of \cite{MosZal} again
results in an efficient hidden subgroup algorithm in the
(infinitely based) uniform circuit model (ii).

Koiran, Nesme and Portier showed in~\cite{KoNePo07} that
the bounded-error quantum query complexity of the hidden subgroup
problem in a finite abelian group of rank $n$ (such as the group $\Z_{m^k}^n$) 
is $\Theta(n)$. Thus, regarding the number
of queries, the overhead of our exact method for $\Z_m^n$ is constant
factor if $m$ is a prime number and $O(\log^2 m)$ in the general case. For $\Z_{m^k}^n$
the overhead is $O(k\log^2m)$ (or $O(k)$ when $m$ is prime). In the classical setting,
Nayak~\cite{Nayak21} proposed a simple deterministic polynomial
time method that solves the hidden subgroup problem in an abelian group $G$
using $O(\sqrt{\size{G}})$ queries. The method has an extension to the
non-commutative case that uses $O(\sqrt{\size{G}\log \size{G}})$ queries.
Note that for $G=\Z_2^n$, the classical randomized query complexity of 
the HSP is shown to be
$\Omega(\sqrt{\size{G}})$ by Simon~\cite{Simon}, so in this group Nayak's algorithm has
a constant overhead only.

\paragraph*{Applications.}
Watrous~\cite{Watrous} gave a polynomial time
quantum algorithm for computing the order of
a solvable black-box group with unique encoding of elements.
Black-box groups are used to obtain
general algorithms for various "real" groups such as the multiplicative
groups modulo given integers, permutation groups and matrix groups over finite fields. A matrix group
over the 
$q$-element 
field $\F_q$ is just a subgroup of the group 
$GL_n(\F_q)$ of invertible $n$ by $n$ matrices over $\F_q$. 
Elements of a black-box group are represented by binary strings
of a certain length $\ell$ and the group operations 
are given by oracles and as input, a generating set for the group is given.
See Section~\ref{sec:prelim}
for more details.
% on black-box groups.

We use the exact Fourier transform of $\Z_m$ and the
hidden subgroup algorithm of $\Z_{m}^n$ (for various 
values of $n$) to devise an exact version of Watrous's algorithm
and to solve some other problems for abelian and solvable 
black-box groups with unique encoding of elements
whose order is a divisor of $m^k$ for some $k$.
Our assumption on having the information $m$ about the order
is not standard. In fact, knowing a multiple of the order
of the multiplicative group modulo the number $N$ would
make it possible to factor $N$ in randomized polynomial time. However,
there are situations when a multiple of the order of a group
can be efficiently computed even in deterministic polynomial
time. This is the case for permutation groups and
linear groups over finite fields. Watrous uses a classical Monte Carlo
method by Babai et al.~\cite{BCFLS} to construct a series of subgroups
that his algorithm builds on. Here we will combine
(an exact version of) Watrous's technique with a rather direct
method~\cite{Luks} that finds abelian normal subgroups in
solvable groups to construct the series from the bottom. 

The following theorem summarizes all the applications of the exact quantum hidden subgroup algorithm and the exact version of Watrous's algorithm we obtain.

\begin{theorem}
\label{thm:groups}
There are exact quantum algorithms running in time
$(r\ell\log m)^{O(1)}$ using calls to oracles
for the group operations, the quantum Fourier transform $QTF_m$ of $\Z_m$
and to the inverse oracles that,
in a black box group $G$ with
unique encoding of elements by $\ell$-bit strings given by a list of
$r$ generators, decide whether $G$ is solvable of order dividing $m^k$ for
some positive integer $k$, and if yes, also 
perform the following tasks. 
\begin{enumerate}
\item[(i)] Compute the uniform superposition of elements of $G$.
\item[(ii)] Decide membership in $G$.
\item[(iii)] Compute the order of $G$.
\item[(iv)] Compute the commutator subgroup $G'$ and the derived series of $G$.
\item[(v)] Given a normal subgroup $N$ of $G$ such that
$G/N$ is abelian, decompose $G/N$ as a direct sum of cyclic groups.
\end{enumerate}
\end{theorem}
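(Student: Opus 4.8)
The plan is to build, from the bottom up, a normal series
\[
1=N_0\triangleleft N_1\triangleleft\cdots\triangleleft N_t=G
\]
in which every factor $N_{i+1}/N_i$ is abelian, and to carry out all the required computations using three exact primitives: the hidden subgroup algorithm of Theorem~\ref{thm:HSP} (together with its special cases, exact order finding in a cyclic factor and exact structure computation of an abelian group), an exact analogue of Watrous's procedure for preparing the uniform superposition $\ket{N_i}=\size{N_i}^{-1/2}\sum_{h\in N_i}\ket{h}$ of a subgroup, and the deterministic method of Luks~\cite{Luks} for extracting an abelian normal subgroup of a solvable group. The device that ties these together is the observation already used in the generalized HSP: for a subgroup $N$ whose superposition $\ket{N}$ we can prepare, the coset-valued function $g\mapsto\ket{gN}=\size{N}^{-1/2}\sum_{h\in N}\ket{gh}$ takes equal values on a coset and orthogonal values on distinct cosets, so it is a legitimate state-valued hiding function. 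Consequently all computation in the quotient $G/N$ --- multiplication, and in particular \emph{exact} membership and equality testing modulo $N$ --- can be reduced to instances of Theorem~\ref{thm:HSP} over a suitable $\Z_{m^K}^r$, with no recourse to the probabilistic swap test.

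For the inductive step, suppose $N=N_i\triangleleft G$ has been found (by a generating set) together with a unitary preparing $\ket{N}$ and an exact membership test for $N$. Applying Luks's method inside $G/N$ --- whose group operations we simulate by black-box operations on representatives combined with the exact equality test modulo $N$ --- either certifies that $G/N$ is trivial (so the series is complete), or returns generators $\bar a_1,\dots,\bar a_s$ of a nontrivial abelian normal subgroup $A/N\triangleleft G/N$, or certifies that $G/N$ has no nontrivial abelian normal subgroup, in which case $G$ is not solvable and we reject. To prepare $\ket{A}$ from $\ket{N}$ we process the generators one at a time: writing $N^{(j)}=\langle N,a_1,\dots,a_j\rangle$, each $N^{(j)}$ is normal in $N^{(j+1)}$ because $A/N$ is abelian, so $a_{j+1}$ has a well-defined order $d$ modulo $N^{(j)}$, obtained exactly by order finding (the cyclic HSP applied to $x\mapsto\ket{a_{j+1}^x N^{(j)}}$ over $\Z_{m^K}$ with $m^K\ge 2^\ell\ge\size{G}$). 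We then set $\ket{N^{(j+1)}}=d^{-1/2}\sum_{x=0}^{d-1}\ket{a_{j+1}^x N^{(j)}}$, preparing the length-$d$ uniform superposition exactly by the exact-QFT machinery and folding it into $\ket{N^{(j)}}$ by controlled multiplication via repeated squaring. Iterating over $i$ builds the whole series and, as by-products, the superposition $\ket{G}$ (task (i)), the order $\size{G}=\prod_i[N_{i+1}:N_i]$ (task (iii)), and an exact membership test for $G$ (task (ii)); the decision version is settled along the way, since $G$ is solvable of order dividing some $m^k$ exactly when the construction terminates at $G$ with every factor order dividing $m^K$ (any order divisible by a prime outside $m$ is detected because the candidate returned by the $\Z_{m^K}$ HSP then fails the exact membership check).

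For task (iv) I would compute the commutator subgroup $G'$ as the normal closure of the commutators $[g_a,g_b]$ of the given generators; membership in, and the superposition of, the subgroup generated by a given list are handled exactly by the same order-finding-and-fold-in routine used above, so the derived series $G\triangleright G'\triangleright G''\triangleright\cdots$ is obtained by iterating this on successive terms. For task (v), given $N\triangleleft G$ with $G/N$ abelian, consider the homomorphism $\phi\colon\Z_{m^K}^r\to G/N$, $(c_1,\dots,c_r)\mapsto\bigl(\prod_a g_a^{c_a}\bigr)N$, encoded by the state-valued function $f(c)=\ket{(\prod_a g_a^{c_a})N}$; this $f$ hides the lattice of relations $L=\ker\phi$, so Theorem~\ref{thm:HSP} returns $L$ in Hermite normal form, and the Smith normal form of the resulting matrix yields the decomposition of $G/N\cong\Z_{m^K}^r/L$ into cyclic summands.

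The main obstacle is maintaining exactness and determinism throughout: the classical black-box routines for normal closures and for quotient arithmetic are ordinarily Monte Carlo, and the whole point is to replace their randomized and error-prone ingredients by the exact HSP. Two places deserve the most care. First, every equality and membership test must be exact; this is guaranteed because each such test amounts to reading off whether a hidden subgroup returned by Theorem~\ref{thm:HSP} contains the relevant element, and that output is exact and classical. Second, preparing the uniform superposition over $\{0,\dots,d-1\}$ for an arbitrary $d\mid m^K$ must be exact and unentangled with garbage; I expect this to rest on the exact $QFT_m$-based construction (as in Mosca--Zalka) rather than on amplitude amplification, and on keeping all ancillae cleanly uncomputed so that the prepared states $\ket{N_i}$ live in a separate register, exactly as required by the purely unitary, garbage-free output convention underlying Theorem~\ref{thm:HSP}.
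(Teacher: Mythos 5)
Your overall skeleton matches the paper's: build a normal series from the bottom using Luks's conjugate-collecting method, represent cosets by the state-valued function $g\mapsto\ket{gN}$ so that membership, orders and abelian quotient structure all reduce to Theorem~\ref{thm:HSP}, and read off (ii)--(v) from the series. However, the one step you treat as routine --- passing from a preparator of $\ket{N^{(j)}}$ to a preparator of $\ket{N^{(j+1)}}$ --- is exactly where the real difficulty lies, and your proposed construction does not work as stated. If you prepare $d^{-1/2}\sum_{x=0}^{d-1}\ket{x}$ and apply controlled multiplication to $\ket{N^{(j)}}$, you obtain the \emph{entangled} state $d^{-1/2}\sum_{x}\ket{x}\ket{a_{j+1}^xN^{(j)}}$, not $\ket{N^{(j+1)}}$ tensored with something independent. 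Erasing the index register would require recovering $x$ from a representative of the coset $a_{j+1}^xN^{(j)}$, i.e.\ a constructive discrete-logarithm/membership computation, which is not available. This is precisely the problem Watrous's technique is designed to solve: the paper Fourier-samples $s$ copies of $\ket{N}$, exploits the fact that each residual state $\ket{\gamma_y}=m^{-1/2}\sum_x\omega^{xy}\ket{u^xN}$ is an eigenstate of multiplication by $u$ with eigenvalue $\omega^{-y}$, and uses controlled multiplications between copies (together with the deterministic gcd lemma, Lemma~\ref{lem:gcd}, to handle the degenerate cases exactly) to convert $s-1$ of the $\ket{\gamma_{y_i}}$ into $\ket{\gamma_0}=\ket{K}$. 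The resulting ``pyramid'' of copies --- starting from $h+1$ copies of $\ket{1_G}$ and losing one copy per level --- is an essential structural feature that your argument is missing entirely.

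A second, related gap: your plan to prepare the uniform superposition over $\{0,\dots,d-1\}$ where $d$ is the order of $a_{j+1}$ modulo $N^{(j)}$ requires a Fourier-type gate whose modulus $d$ is a divisor of $m^K$ that only \emph{emerges during the computation}. The paper explicitly identifies this as the obstruction that forces algorithms such as the square-free decomposition of Li et al.\ into Mosca's stronger circuit model, and it deliberately circumvents it by always working modulo the known multiple $m$ of the order (using only $QFT_m$) rather than modulo the true order. As written, your construction would not fit any of the three uniformity models the theorem is claimed for. The remaining parts of your proposal (Luks's method for finding an abelian normal extension, detecting non-solvability by bounding the number of updates, computing $G'$ and the derived series by iterated commutators with stabilization tests, and obtaining (v) from the Hermite and Smith normal forms of the relation lattice) agree in substance with the paper.
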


The uniform superposition in (i) 
is the element
$\ket{G}=\frac{1}{\sqrt{\size{G}}}\sum_{x\in G}\ket{x}$ of the group
algebra $\C G\subseteq \C^{2^\ell}$.

We remark that Luks in~\cite{Luks} presents classical deterministic
polynomial time algorithms for testing solvability of finite {\em matrix
groups} and for many tasks like (iv) in these groups. However, it
is not known whether such classical
deterministic algorithms exist for {\em black-box}
groups, even if oracles for order finding and membership
test in abelian subgroups are allowed.

Of course, the full machinery of the proof of Theorem~\ref{thm:groups}
is not required for the result in (v) for $N=\{1_G\}$, that is computing 
the structure of $G$
when $G$ is abelian of order having the same prime factors as $m$. In fact,
this is a direct consequence of Theorem~\ref{thm:HSP}. Among others, the
result captures finding orders of group elements, in particular in
 the multiplicative group $\F_q^*$ of the $q$-element
field 
$\F_q$. This can be used to find primitive 
elements in certain fields. (Recall that a primitive element of 
$\F_q$ is a field element of multiplicative order $q-1$.) When $q$ is
a power of the prime $p$, a subset of $\F_q$ containing at least one
 primitive element can be constructed deterministically in time
$\poly(\log q)$ assuming the extended Riemann hypothesis (ERH)
when $q=p$ by the results of Wang and Bach~\cite{Wang,Bach} and
when $q=p^2$ by Shoup~\cite{Shoup}. Shoup~\cite{Shoup} and
Shparlinski~\cite{Shparlinski} also proposed methods that work unconditionally 
in time $p\poly(\log q)$. 
Not much later Perel'muter and Shparlinski~\cite{PerShp} came up with
a construction that requires time $\sqrt{p}\poly(\log q)$,
see also Lemma~7 in~\cite{ShparlinskiAdv} for a more explicit bound.
Combining these with order finding, one
immediately obtains the following.

\begin{corollary}
\label{cor:primitive}
Let $q$ be a power of the prime $p$. Then a primitive element
in $\F_q$ can be found by an exact quantum algorithm using
$QFT_{q-1}$ and its inverse in time $\sqrt{p}\poly(\log q)$.
Furthermore, if $q=p$ or $q=p^2$ then,
assuming ERH, a primitive
element in $\F_q$
can be found by an exact quantum algorithm using
$QFT_{q-1}$ and its inverse in time $\poly(\log q)$.
\end{corollary}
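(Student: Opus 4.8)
The plan is to combine the cited deterministic constructions with the exact order-finding supplied by Theorem~\ref{thm:HSP}. First I would run the appropriate classical construction to obtain a subset $S\subseteq\F_q$ that is guaranteed to contain at least one primitive element: the construction of Perel'muter and Shparlinski~\cite{PerShp} in the unconditional case, running in time $\sqrt{p}\poly(\log q)$, and those of Wang and Bach~\cite{Wang,Bach} (for $q=p$) or Shoup~\cite{Shoup} (for $q=p^2$) in the ERH case, running in time $\poly(\log q)$. In each case the number of elements produced is at most the running time, so $\size{S}\le\sqrt{p}\poly(\log q)$ unconditionally and $\size{S}\le\poly(\log q)$ under ERH.

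The decisive observation is that, while these constructions certify the \emph{existence} of a primitive element in $S$, classically one cannot single it out without factoring $q-1$. Instead I would compute the multiplicative order of each candidate $a\in S$ exactly by casting order-finding as a hidden subgroup problem in $\Z_{q-1}$: set $m=q-1$ and $n=k=1$, and let the hiding function be $f(j)=\ket{a^j}$ for $j\in\Z_{q-1}$. This is constant on the cosets of $\mathrm{ord}(a)\Z_{q-1}$ and takes distinct values on distinct cosets, hence hides $H=\mathrm{ord}(a)\Z_{q-1}$. Since $a$ is an explicitly given field element, the powers $a^{2^i}$ can be precomputed classically, so the oracle $U_f\colon\ket{j}\ket{0}\mapsto\ket{j}\ket{a^j}$ is realized cleanly by a product of controlled multiplications by these constants, a reversible circuit of size $\poly(\log q)$ once $\F_q$ is given by an explicit representation. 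Applying Theorem~\ref{thm:HSP} with these parameters returns the Hermite normal form of the preimage of $H$ in $\Z$, namely the single generator $\mathrm{ord}(a)$; this costs $\poly(\log q)$ calls to $QFT_{q-1}$ and its inverse and runs in time $\poly(\log q)$.

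Finally I would output any $a\in S$ with $\mathrm{ord}(a)=q-1$; such an $a$ exists by the guarantee on $S$ and is exactly a primitive element. Because the construction is deterministic and Theorem~\ref{thm:HSP} computes each order with certainty, the procedure succeeds with probability one, and its total cost is $\size{S}$ order computations, i.e.\ $\sqrt{p}\poly(\log q)$ unconditionally and $\poly(\log q)$ under ERH, as claimed. I do not expect a genuine technical obstacle: the statement is essentially a reduction, and the only point demanding care is exactness --- it is precisely the exact order-finding of Theorem~\ref{thm:HSP}, rather than Shor's bounded-error version, that makes it possible to identify the primitive element deterministically without factoring $q-1$.
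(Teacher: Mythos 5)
Your proposal is correct and matches the paper's intended argument: the paper likewise obtains the corollary by combining the cited deterministic constructions of a small set guaranteed to contain a primitive element with exact order finding in $\F_q^*$, which is the $n=k=1$, $m=q-1$ instance of Theorem~\ref{thm:HSP} applied to $f(j)=\ket{a^j}$. Your added remarks on the reversible realization of $U_f$ and on why exactness (rather than Shor's bounded-error order finding) is the essential point are consistent with the paper's discussion.
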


We remark that for $p=\poly(\log q)$, 
the first part of this corollary implies existence 
of an exact polynomial time quantum algorithm in model (ii). 

The structure of the paper is the following. In Section~\ref{sec:prelim}
we recall further definitions, discuss the notation and the terminology
used in this work. Some standard facts and 
techniques are also recalled there. 
Section~\ref{sec:HSP} is devoted to the exact hidden subgroup algorithm,
while the applications (mostly in abelian and solvable groups) are
discussed in Section~\ref{sec:groups}.

\section{Notation, terminology and preliminaries}
\label{sec:prelim}

This section 
is devoted to definitions not 
yet
given in the Introduction,
to introducing 
notation and terminology 
that are less known or
not standard and to discussion of some known facts and techniques 
that serve as ingredients of the algorithms presented in this paper.

It is common to define exact quantum procedures as those involving
measurements that produce the desired result with probability one.
By delaying the measurements until 
the end and omitting the final measurement,
one obtains an equivalent unitary procedure. As already mentioned, 
in this paper we consider the unitary version. With a few exceptions 
such as the quantum Fourier transform,
we require (or ensure) that an exact quantum procedure 
is a circuit that implements 
a unitary transformation that maps
$\ket{x}\ket{0}\ket{0}$ to $\ket{x}\ket{output(x)}\ket{garbage(x)}$
where $\ket{x}$ is the computational basis state corresponding to the
input string $x$ and $\ket{output(x)}$ is the state which is the output
corresponding to the actual input string $x$ while $\ket{garbage(x)}$
contains the by-products of the computations. From the initial state
$\ket{x}\ket{0}\ket{0}$ we will frequently omit the third register
as well as in some cases even the second one. Similarly, during the 
description of procedures some registers not
mentioned previously may pop up, meaning a piece of memory
initialized to the zero string of appropriate size. Also, when the content
of a temporary storage is reset to zero, then it can be left away from the 
rest of the description. The output is often classical. By that we mean
that $\ket{output(x)}$ is a computational basis state corresponding to
the string describing the actual output. In that case, the garbage
can be cleared by standard techniques. 

For standard notions from group theory such as subgroups, normal subgroups,
commutator or derived subgroup, derived series, solvability, etc., we refer 
the reader to the textbooks, e.g., to \cite{Robinson}. A finite solvable group
$G$ has a subnormal series with cyclic factors, that is, a sequence 
of subgroups $\{1\}=G_0<G_1<\ldots<G_h=G$ such that
$G_i\lhd G_{i+1}$ and $G_{i+1}/G_i$ is cyclic for $0\leq i<h$.
We will use the term {\em polycyclic series} for such sequences.
A polycyclic series is usually given by a list of group
elements $g_1,\ldots,g_h$
such that $G_i$ is a generated by $g_1,\ldots,g_i$ ($i=1,\ldots,h$).

The concept of black-box groups was introduced by 
Babai and Szemer\'edi \cite{BabSzem}
for studying the structure of finite matrix groups.
As already mentioned, elements of a black-box group (with unique encoding) 
are represented by binary strings of a certain length 
$\ell$ and the group operations (multiplication,
taking the identity element and taking inverses) are given by
oracles. In the quantum setting, the oracle for multiplication
is assumed to be a unitary transformation of $\C^{2^\ell\cdot 3}$
mapping $\ket{x}\ket{y}\ket{0}$ to $\ket{x}\ket{y}\ket{z}$ 
where $z$ encodes the product of the group elements represented by
$x$ and $y$ whenever they both encode valid group elements. 
As we can easily find a
multiple of the order of a given element, we do not actually need the oracles
for taking the identity element and inverses. (Given $x$, a multiple
of the order can be the smallest number of the form $m^t$ with
$1\leq t\leq \ell$
such that $x^{m^t+1}=x$ and then $1_G=x^{m^t}$ and $x^{-1}=x^{m^t-1}$.)
As input, a generating set for the group is given.
We remark that in computational group theory, to capture factor groups
by certain normal subgroups, it is common to also consider black-box 
groups with non-unique encoding of elements. In that case, 
several code-words can represent the same group element and
a further oracle is included to test equality. As computing
the structure of already an elementary abelian black-box group with
non-unique encoding becomes difficult even on quantum computers,
see~\cite{IJS19}, we do not consider this general concept in this paper.

\subsection{Subgroups of $\Z_{m^k}^n$, lattices  
and normal form matrices}

\label{subsec:HNF-SNF}

We will represent a subgroup $A$ of $\Z_{m^k}^n$ by 
a special basis for the lattice $L_A$ in $\Z^n$ 
which is the inverse image of $A$ at the projection map $\Z^n\rightarrow 
\Z^n/m^k\Z^n=\Z_{m^k}^n$. The lattice $L_A$ contains $m^k\Z^n$ therefore
it has full rank $n$.

Recall that for every $n$ by $s$ integer matrix $M$ there exists
an $s$ by $s$ unimodular matrix (an integer matrix with determinant 
$\pm 1$) $U$ such that $H=MU$ where
\begin{itemize}
    \item[1.] $H$ is lower triangular 
whose full zero columns are located to the right of any other column.
    \item[2.] 
The nonzero diagonal entries of $H$ are positive.
    \item[3.] The off-diagonal elements of $H$ are
non-negative and strictly less then the diagonal element
of their rows. 
\end{itemize}
$H$ is called the {\em Hermite normal form} of $M$. 
The Hermite normal form $H$ is uniquely determined
by the lattice $L_M$ in $\Z^n$ spanned by the columns of $M$ 
therefore its nonzero columns form a well-defined basis of $L_M$.
We will use the $n$ by $n$ matrix $H_A$ in Hermite normal
form to represent the lattice $L_A$. For example, if
$A$ is the trivial subgroup of $\Z_{m^k}^n$ then
$L_A=m^k\Z^n$ and $H_A$ is the $n$ by $n$ scalar
matrix $m^k\cdot\id$. 

Also, there exist unimodular matrices $L$ and $R$ 
of sizes $n$ by $n$ and $s$ by $s$, respectively,
such that the matrix $S=LMR$, called the {\em Smith normal form}
of $M$, whose positive entries $d_1,\ldots,d_r$
are located in the first $r$ columns (equivalently, rows)
with $d_{i}|d_{i+1}$ for $1\leq i<r$. The Smith normal form
is also uniquely determined by $L_M$. 

The unimodular matrices $U$, $R$ and $L$ are referred to
as {\em multipliers}. The right multipliers $U$ and $R$ correspond 
to changing the basis for $L_M$ while the left multiplier
$L$ corresponds to changing the basis for $\Z^n$. The normal form
matrices $H$ and $S$ as well as the multipliers can be computed
in deterministic polynomial time (in the number of bits
representing $M$), see e.g.,~\cite{KanBach}.

Besides providing unique representations of subgroups,
the Hermite normal form can be used to solve certain subtasks
efficiently. For example, we will work with pairs of
subgroups $A,B$ of $\Z_{m^k}^n$ for which we already know
that $A\leq B$ and need to decide
whether they are equal. Then this can be readily tested by comparing
the Hermite normal form matrices $H_A$ and $H_B$ representing these
subgroups. Also, if $A<B$ then the first column of $H_B$ whose
diagonal entry is smaller than the corresponding entry in $H_A$
will be an element of $B\setminus A$. 

The Smith normal form
together with the multipliers can be used to solve 
systems of linear congruences such as the following
efficiently. We consider the standard scalar product 
$(x,y)=x^Ty$ modulo $m$. 
For $A\leq \Z_m^n$ the
subgroup $A^\perp$ consists of the elements $y\in \Z_m^n$
such that $(x,y)=0$ for every $x\in A$. We have 
$L_{A^\perp}=\{z\in \Z^n:H_A^Tz\in m\Z^n\}$. Now if 
$S=LH_AR$ is the Smith normal form of $H_A$ then
substituting $z'=(L^T)^{-1}z$ we obtain
$(L^T)^{-1}L_{A^\perp}=\{z'\in \Z^n:S^Tz'\in m\Z^n\}$. A basis of
this lattice can be easily
obtained, as $S$ is diagonal. A
basis for $L_{A^\perp}$ can be obtained
by multiplying by $L^T$. A further Hermite normal form calculation 
gives the unique
representation of $A^\perp$.

Recall that any abelian group $A$ generated by
$n$ elements can be presented as a factor of 
a free abelian group $\Z^n$. If the generators
$a_1,\ldots,a_n$ are fixed, then elements of
the kernel of the map 
$\phi:(x_1,\ldots,x_n)^T\mapsto \prod a_i^{x_i}$
are the {\em relations} corresponding to the generators.
It is usual to write a relation $(x_1,\ldots,x_n)^T$
in the form $\prod a_i^{x_i}$. A set of 
{\em defining relations} is any
collection that generate the kernel.   
An Hermite normal form matrix for the kernel
is a sort of standard and rather economical list
of defining relations for the fixed set of generators.
Theorem~\ref{thm:HSP} will be applicable in the special
cases when $a_i^{m^k}=1_A$ for $i=1,\ldots,n$. Then
the kernel of $\phi$ contains $m^k\Z^n$ and hence
$\phi$ induces a homomorphism $f$ from $\Z_{m^k}^n$
onto $A$. Let $H$ be the subgroup hidden by $f$. Then
the kernel of $\phi$ is the lattice $L_H$. Therefore,
the output of the algorithm of Theorem~\ref{thm:HSP}
is just the Hermite normal form matrix for the relations.
If $A$ is cyclic and $n=1$ then this $1$ by $1$ matrix
contains the order of $u_1$. More generally, the order
of $A$ is the product of the diagonal entries of the
corresponding Hermite normal form matrix. Perhaps
it is worth mentioning the case when $n=2$ and 
$A$ is the cyclic group generated by $u_2$.
Then the Hermite normal
form matrix will be
$$\begin{pmatrix}
1 & 0\\
o_2-d & o_2
\end{pmatrix},$$
where $o_2$ is the order of $u_2$ and $d$ is
the base-$u_2$ discrete logarithm of $u_1$.

By changing the generators, one may obtain
possibly even better and even more economical
presentations. 
In particular, $A$ is isomorphic to the direct
sum $\Z_{m_1}\oplus\ldots\oplus\Z_{m_{n'}}$ with $n'\leq n$,
$m_i>1$, and $m_{i-1}|m_i$ for $1<i$. That is,
there exist generators $b_1,\ldots,b_{n'}$ for $A$
with "diagonal" defining relations 
$b_1^{m_1},b_2^{m_2},\ldots,b_{n'}^{m_{n'}}$.
Such generators and relations can be found using 
the Smith normal form of an initial matrix
of defining relations. In our special case, the Smith
normal form will be the diagonal matrix
$\diag(1,\ldots,1,m_1,\ldots,m_{n'})$ and
the left multiplier gives expressions for the 
new generators in terms of the original ones.

\subsection{Amplitude amplification from $1/\sqrt{2}$ to $1$}

\label{subsec:BraHoy}

Brassard and Hoyer in~\cite{BraHoy} proposed a method,
based on a technique similar to the rotation used in
Grover's search~\cite{Grover}, 
to get rid of the "undesirable half" of a quantum
state provided that the state can be produced by
a unitary procedure. We state here a version of their
result as follows. 

Assume that there is a unitary procedure $\cal U$ acting on $s$ qubits
that maps the state $\ket{0}$ to $c_0\ket{B}\ket{0}+c_1\ket{A}\ket{1}$,
where 
$\ket{A}$ and $\ket{B}$ are of unit norm and the last register
contains just one qubit. 
Then there is a unitary procedure ${\cal U}'$ that, using
$\cal U$ or its inverse 3 times and $O(s)$ ordinary
gates, maps the state $\ket{0}$ to
$c_0'\ket{B}\ket{0}+c_1'\ket{A}\ket{1}$ for some
complex numbers $c_0'$ and $c_1'$ such that in the case when
$c_0=c_1=\frac{1}{\sqrt{2}}$ then $c_0'=0$
and $\abs{c_1'}=1$, that is, the outcome is 
(up to a phase) $\ket{A}\ket{1}$. Also, 
if $c_1=0$ then $c_1'=0$, i.e., in that case
${\cal U}'$ gives $\ket{B}\ket{0}$ (up to a phase).
The result
can be easily extended to the case when $\cal U$ maps
$\ket{x}\ket{0}$ to 
$\frac{1}{\sqrt{2}}\ket{x}\ket{B_x}\ket{0}+\frac{1}{\sqrt{2}}\ket{x}\ket{A_x}\ket{1}$.
That is, $\cal U$ may have an input that is left intact by $\cal U$.

We will apply this in our hidden subgroup 
algorithm (Section~\ref{sec:HSP}). There we
use it in a form closer to that in~\cite{BraHoy}. 
We assume availability of a procedure ${\cal U}_0$ that produces a state
$\sum_{y\in S}c_y\ket{y}\ket{\gamma_y}$ where $S$ is a set of binary
strings of a certain length, $\ket{y}$ denotes the computational
basis vector corresponding to $y$ and $\ket{\gamma_y}$ is of unit norm.
We also suppose that we have a unitary procedure ${\cal U}_1$ that maps $\ket{y}\ket{0}$
to $\ket{y}\ket{f(y)}$ for some function $f:S\rightarrow \{0,1\}$ such
that $\sum_{y\in S:f(y)=1}|c_y|^2=\frac{1}{2}$.
Then $\cal U$ is just the composition 
$\id\otimes {\cal U}_1\circ {\cal U}_0\otimes \id$
and the "desired half" of the state is 
$\ket{A}\ket{1}={\sqrt 2}\sum_{y:f(y)=1}c_y\ket{y}\ket{\gamma_y}\ket{1}$.
We even apply this in certain cases when $\sum_{y\in S:f(y)=1}|c_y|^2>\frac{1}{2}$.
Then we use the idea of Mosca and Zalka applied in~\cite{MosZal}:
by modifying $f$ we throw away some of the good $y$'s and adjust
the coefficient $c_y$ of some others to obtain a superposition
of "total squared amplitude" $\frac{1}{2}$
consisting only of desirable $y$'s.

\subsection{The quantum Fourier transform and Fourier sampling}

\label{subsec:Fourier-sampling}

The elements of the group $\Z_m$ are represented
by integers between $0$ and $m-1$ (in binary). 
The quantum Fourier transform $QFT_m$ of the group
$\Z_m=\Z/m\Z$ maps the computational basis state $\ket{x}$ 
to the state $\frac{1}{\sqrt m}\sum_{y\in \Z_m}\omega^{xy}\ket{y}$,
where $\omega=e^{\frac{2\pi \iota}{m}}$ and
$(\,,\,)$ stands for the usual scalar product on $\Z_m^n$.
Note that the quantum Fourier transform
of $\Z_m^n$ can be implemented as the tensor product of $n$ copies
of $QFT_m$. 

Fourier sampling involves the first few steps of the
known hidden subgroup algorithms. Since we will need some related notation
and since we apply it for the {\em generalized} version of the HSP, 
and also in the exact variant of Mosca's algorithm,
it will be useful to recall details of that standard procedure in our context. 

The procedure works on two registers. 
The first one is for holding group elements from $\Z_m^n$ 
while the second one
is for the states which are the values of $f$.
The first step uses the exact quantum Fourier transform
of $\Z_m^n$ to map the state  $\ket{0}\ket{0}$ to
$$\frac{1}{m^{n/2}}\sum_{x\in \Z_m^n}\ket{x}\ket{0}.$$
Then an application of the oracle $U_f$ gives
$$\frac{1}{m^{n/2}}\sum_{x\in \Z_m^n}\ket{x}\ket{f(x)}.$$
Finally, another application of the quantum Fourier transform
results in
\begin{equation}
\label{eq:Fourier-ori}
\ket{\Psi}=\frac{1}{m^{n}}\sum_{x,y\in \Z_m^n}
\omega^{(x,y)}\ket{y}\ket{f(x)}.
\end{equation}
Let $C$ be a cross-section of $H$ in $\Z_m^n$.
Then every $x\in \Z_m^n$ can uniquely be written as
$x=x'+c$ with $x'\in H$ and $c\in C$.
It follows that
$$\ket{\Psi}=\frac{1}{m^{n}}\sum_{y\in \Z_m^n}\sum_{c\in C}\sum_{x\in H}
\omega^{(x+c,y)}\ket{y}\ket{f(c)}.$$
For fixed $c\in C$ and $y\in \Z_m^n$,
we have 
\begin{equation*}
\sum_{x\in H}\omega^{(x+c,y)}=
\left\{
\begin{array}{ll}
\omega^{(c,y)}\size{H} & \mbox{if $y\in H^\perp$,}\\
0 & \mbox{otherwise.}
\end{array}
\right.
\end{equation*}
(For a proof, notice that map $x\mapsto \omega^{(x,y)}$
is a character $\chi$ of $H$ which is the trivial character $1$ if
and only if $y\in H^\perp$ and use the orthogonality relation
for $\chi$ and $1$.)
Therefore, using also that 
$\size{H}\cdot\size{H^\perp}=\size{H}\cdot\size{C}=m^n$,
we have
\begin{equation}
\label{eq:Fourier-sampling}
\ket{\Psi}=\frac{1}{\sqrt{\size{H^\perp}}}\sum_{y\in
H^\perp}\ket{y}\ket{\gamma_y},
\end{equation}
where
\begin{equation}
\label{eq:Fourier-gamma}
\ket{\gamma_y}=
\frac{1}{\sqrt{\size{C}}}\sum_{c\in C}\omega^{(c,y)}\ket{f(c)}.
\end{equation}

In our hidden subgroup algorithm,
 only the first register of $\ket{\Psi}$ (that is, $\ket{y}$)
will be used and $\ket{\gamma_y}$ will be considered
as garbage. However, $\ket{\gamma_y}$ plays a crucial role
in Watrous's algorithm and its
exact version, see Subsection~\ref{subsec:superpos}.

\section{The exact hidden subgroup algorithm}
\label{sec:HSP}

In this section we prove Theorem~\ref{thm:HSP}. We begin with
the special case $k=1$ and conclude the proof 
with a reduction from the general case to that.

\subsection{The HSP in $\Z_m^n$}

\label{subsec:HSP_basic}

In this subsection we provide the construction of an 
exact algorithm for finding the hidden subgroup $H$ of 
$\Z_m^n$ which requires $O(n\log^2 m)$ queries.
We denote by ${\cal P}$ the procedure described in
Subsection~\ref{subsec:Fourier-sampling}. Recall  
that ${\cal P}$ maps $\ket{0}\ket{0}$ to the state
$\ket{\Psi}$ given in (\ref{eq:Fourier-sampling}).

We use an iteration to compute $H$. During the iteration,
 we maintain a subgroup $K$ of $H$ as well as
 a subgroup $L$ of $H^\perp$. We use the Hermite normal form 
matrices as described in Subsection~\ref{subsec:HNF-SNF} for representing 
$K$ and $L$.
Initially $K=\{0\}$ and $L=\{0\}$ and in 
each round we
enlarge either $K$ or $L$ until $K$ becomes equal to $L^\perp$.
(Note that the conditions on $K$
and $L$ imply that $K\leq L^\perp$ and that $K=H$ if and only if
$K=L^\perp$.)

If $K<L^\perp$ we choose an element $u\in L^\perp \setminus K$.
Computing $L^\perp$, testing equality of $K$ and $L^\perp$, and in the 
case when $K<L^\perp$ finding an $u$ from the difference can be easily 
done using the Hermite and Smith normal form methods described in
Subsection~\ref{subsec:HNF-SNF}.

The map $\mu_u:\Z_m^n\rightarrow \Z_m$ defined
as $\mu_u(y)=(u,y)$ is a homomorphism from $\Z_m^n$ to $\Z_m$. 
Therefore, the image $\mu_u(H^\perp)$ is a subgroup of $\Z_m$
and for each $a\in \mu_u(H^\perp)$
the number of elements $y\in H^\perp$ such that $\mu_u(y)=a$
is $\frac{\size{H^\perp}}{\size{\mu_u(H^\perp)}}$. 
Notice that $\mu_u(H^\perp)$ is the trivial subgroup of $\Z_m$
if and only if $u\in {H^\perp}^\perp=H$. Our aim is to 
find an element $y\in H^\perp$ such that $(u,y)$ 
is nonzero if (and only if) 
$u\not\in H$. To be more specific, for $u\not\in H$ 
we want to "distill" a superposition of certain 
states $\ket{y}\ket{\gamma_y}$
with $(u,y)\neq 0$. To this end, 
we use
the amplitude amplification technique of Brassard
and Hoyer~\cite{BraHoy}~described in Subsection~\ref{subsec:BraHoy}, 
combined with the idea of Mosca and Zalka~\cite{MosZal} to tailor the 
"total squared amplitude" to $\frac{1}{2}$.

Assume that $u\not \in H$. 
Let $d$ be the smallest positive integer 
such that $d+m\Z\in \mu_u(H^\perp)$. 
Then $d$ is a divisor of $m$ and the nonzero elements of $\mu_u(H^\perp)$ 
are represented by the $\frac{m}{d}-1$ positive integers of the 
form $td$ with $td<m$. Also, 
if $\frac{m}{d}$ is even then the integers of the form 
$td$ such that $m/2\leq td<m$ represent just half 
of the elements of $\mu_u(H^\perp)$. However, if
$\frac{m}{d}$ is odd then to get the desired "half", 
we need to add a further element of $\mu_u(H^\perp)$, 
say $d$, with weight $\frac{1}{2}$. The point is that we do not know $d$. 
However, fortunately, for at least one integer $0\leq j\leq \log_2 m$,
namely for $j=\lceil \log_2 d\rceil$, 
the interval $(0,2^j]$ contains only $d$ and no other multiple of $d$.
(Indeed, if $j-1<\log_2 d\leq j$ then $d\leq 2^j$ and $2d>2^j$.)

Based on the discussions above, we have the following exact quantum algorithm for hidden subgroup problem in $\Z_m^n$: 

\begin{algorithm}[H]
\caption{Exact Quantum Algorithm for HSP in $\Z_m^n$}
\scriptsize
\begin{algorithmic}[1]
\State \textbf{Initialize:} $K \gets \{0\}$ and $L\gets\{0\}$;
\While{$K\neq L^{\perp}$}
    \State Take $u\in L^{\perp}\setminus K$;
    \State{${Found}\gets \mbox{False}$;}
    \For{$j=-1,\ldots, \lfloor \log_2 m\rfloor$}\vspace{0.1cm}
	\State{}\Comment{$f(j,x,b)=
		\left\{\begin{array}{ll}
			1 & \mbox{if $(u,x)\geq\frac{m}{2}$ or
					$b=1$ and $0<(u,x)\leq 2^j$} \\
			0 & \mbox{otherwise}
		\end{array}
		\right.$}
	\State{}\Comment{${\mathcal U}_j:
  		\ket{0}\ket{0}\ket{0}\ket{0}\mapsto 
\ket{\psi}=\frac{1}{\sqrt{2|H^{\perp}|}}\sum
\ket{x}\ket{\gamma_x}\ket{b}\ket{f(j,x,b)}$,\mbox{~~~~~~~~}}
	\State{}\Comment{where the summation goes through all 
		$x\in H^\perp$, $b\in \{0,1\}$.\mbox{~~~~~~}} 
	\State{Apply the amplitude amplified version of ${\cal U}_j$;}
	\State{}\Comment{Obtain
$\ket{\psi'_j}=\sum
c'_{f(j,x,b)}\ket{x}\ket{\gamma_x}\ket{b}\ket{f(j,x,b)}$,\mbox{~~~~~~~~~~~~~~~~~~~~~}			}
	\State{Look at the $\ket{x}$-register;}
	\If{$(u,x)\neq 0$}
		\State{${Found}\gets \mbox{True}$;}
		\State{$L\gets \langle L\cup \{x\}\rangle$;}
        \EndIf
    \EndFor
    \If{${Found}=\mbox{False}$}
		\State{$K\gets \langle K\cup \{u\}\rangle$;}
    \EndIf
\EndWhile
\end{algorithmic}
\end{algorithm}

Each round consists of iterations for $j=-1,\ldots,\lfloor \log_2 m\rfloor$ of the following procedure.
Like Mosca and Zalka~\cite{MosZal}, we attach a one-qubit register and,
in addition to calling the procedure $\cal P$, we also apply
a one-qubit Hadamard transform to get the qubit 
$\frac{1}{\sqrt{2}}(\ket{0}+\ket{1})$. We use the 
Brassard-Hoyer amplitude amplification
(see Subsection~\ref{subsec:BraHoy}),
where we accept $\ket{x}\ket{b}$ (compute the qubit $\ket{1}$
in a further register) iff either $\mu_u(x)\geq m/2$ or 
$b=1$ and $0<\mu_u(x)\leq 2^j$.
The case when $\frac{m}{d}$ is even is covered at least once, when $j=-1$
where the interval $(0,2^j]=(0,\frac{1}{2}]$ does not contain any
integer. While the case when $\frac{m}{d}$ is odd is covered at least once, when $j=\lceil \log_2 d\rceil$. 
If $\mu_u(H^\perp)$
is trivial then for each $j$ we get $x$ with $\mu_u(x)=0$, else
for at least one $j$ we get an $x\in H^\perp$ such that $\mu_u(x)$
is nonzero. In the former case, we have $u\in H\setminus K$ 
and we can replace $K$ with
the subgroup generated by $K$ and $u$. In the latter case, $x\in
H^\perp\setminus L$ and, we can enlarge $L$ by replacing it
with the subgroup generated by $L$ and $x$. 

As in each round before termination, the product of the sizes
of the subgroups $K$ and $L$ is increased
by at least a factor $2$, we need at most 
$\lceil \log_2\size{\Z_m^n}\rceil=\lceil n\log_2 m \rceil$ rounds of iterations. 
The overall number of calls to procedure $\cal P$ or its inverse
and through
those the number
of calls to $U_f$ or its inverse as well as the number
of applications of the Fourier transform of $\Z_m^n$ 
or its inverse is
$O(n\log^2 m)$. Thus, $QFT_m$ or its inverse is applied $O(n^2\log^2 m)$ times. 
The required number of elementary gates is 
$(n\log m)^{O(1)}$.

\paragraph*{Remarks.} (1) When $m$ is an odd prime and $u\not\in H$ then
we have $d=1$ therefore considering the case $j=0$ only is sufficient.
Also, in each step $K$ or $L$ gets enlarged by a factor $p$. Therefore,
for a prime $m$ we have an algorithm using $O(n)$ queries.
\\
(2) The key problem in the approach proposed by Brassard and 
Hoyer~\cite{BraHoy} to extend their method from $\Z_2^m$ to
$\Z_m^n$ is existence and efficient computability of a Boolean
function $\lambda:\Z_m^n\rightarrow \{0,1\}$ 
such that $\lambda$ is identically zero on the subgroup what is denoted by
$L$ in our context while it takes value one on exactly half (or another,
prescribed sufficiently large fraction) of the elements
of the (unknown) $H^\perp$ unless $K=H^\perp$. Using that, one can use
the method of Subsection~\ref{subsec:BraHoy}
to obtain an element of $H^\perp\setminus K$ or conclude $K^\perp=H$
with certainty. Notice that if $u\not\in H$ then the function 
$f(j,\cdot,\cdot):\Z_m^n\times \{0,1\}$ defined in Line 6 of the algorithm 
has analogous properties for the appropriate $j$: it is identically zero on $K\times  \{0,1\}$ while
nonzero on exactly the half of the pairs from $H^\perp\times \{0,1\}$. Thus,
 method for tailoring the "total squared amplitude" can be considered as 
a solution of a modified version of the problem of Brassard and Hoyer.
Note that amplitude amplification using $f(j,\cdot,\cdot)$ may happen
to find an 
element of $H^\perp\setminus L$ not only for the appropriate $j$. The current
pseudo-code adds a new element found by the largest "successful" $j$.

In fact, at this point of the algorithm it cannot be determined in general which one of
the functions $f(j,\cdot,\cdot)$ takes nonzero value exactly on the
half of $H^\perp$.
%\\
%(3) The algorithm of Brassard and Hoyer for the case $m=2$ maintains only
%a subgroup what we denote by $L$ and constructs $n$ homomorphisms
%$\Z_2^n\rightarrow \Z_2$ that generate the subgroup of homomorphisms
%with kernel containing $L$ as a set of candidates for the function
%$\lambda$. 

\subsection{Extending to the HSP in $\Z_{m^k}^n$}

We describe how to reduce the hidden subgroup problem of $\Z_{m^k}^n$
to that of $\Z_m^n$. Assume that $f$ hides the subgroup $H$.
We will use an iteration during which we maintain
a subgroup $H_0$ of $H$. In each round $H_0$ is increased until
we conclude that $H=H_0$. Initially $H_0=\{0\}$. 
It will be convenient to represent $H_0$ by a
matrix in Smith normal form (together with multipliers,
in particular the left multiplier for the basis change of $\Z^n$)
whose columns are a basis for the lattice $L_{H_0}$.

We start each round with computing the subgroup 
$K_0=\{x\in \Z_{m^k}^n:mx\in H_0\}$. This can be
efficiently done using the Smith normal form 
representation of $H_0$. (Indeed, if the Smith 
normal form matrix for $H_0$ is $\diag(d_1,\ldots,d_n)$
the matrix for $K_0$ is
$\diag(\frac{d_1}{\gcd(d_1,m)},\ldots,\frac{d_n}{\gcd(d_n,m)})$.
When $K_0=H_0$ we can stop as in that case 
$H$ cannot be larger than $H_0$. Otherwise, we consider the 
factor group $K=K_0/H_0$. The group $K$ is isomorphic to  
$\bigoplus_{i=1}^n\Z_{\gcd(d_i,m)}$ and one can find
a map $\phi_0:\Z_m^n\rightarrow K_0$ such that
the map $x\mapsto
\phi_0(x)+H_0$ is a homomorphism from $\Z_m^n$ onto $K$.
Using the simultaneous diagonal form for $H_0$ and $K_0$
such a $\phi_0$ can be defined by mapping $(x_1,\ldots,x_m)^T$ to 
$(x_1'\frac{d_1}{\gcd(d_1,m)},\ldots,x_n'\frac{d_n}{\gcd(d_n,m)})^T$,
where $x_i'$ is the least {\em positive} integer congruent with
$x_i$ modulo $\gcd(d_i,m)$. 
The composition $f\circ \phi_0$ defines a function hiding a subgroup
$S$ of $\Z_{m}^n$ such that the subgroup generated at $\phi_0(S)$
and $H_0$ generates $K_0\cap H$. We compute generators for
$S$ using the hidden subgroup algorithm for $\Z_m^n$. The images
of the generators at $\phi_0$ and generators for $H_0$ will generate
$K_0\cap H$. We compute the Smith normal form
matrix representing $K_0\cap H$ and test if this subgroup equals
$H_0$. If yes, we can stop because in that case $H=H_0$. 
Otherwise, we replace $H_0$ with $K_0\cap H$ and proceed 
with the next round.
Using an induction on $j$, one can show that $H_0$ after the $j$th
round contains $H\cap m^{k-j}\Z_{m^k}^n$ ($j\leq k$). 
Therefore, the procedure requires at most $k$ rounds.

\section{Applications to groups}

\label{sec:groups}

In this section we prove Theorem~\ref{thm:groups}. We start with
some rather direct applications of the exact hidden subgroup algorithm
(Subsection~\ref{subsec:direct-appl}). For applicability in the 
proof of Theorem~\ref{thm:groups}, some of the statements are rather technical, 
they include the assumption of availability of a procedure that
computes the uniform superposition
$\ket{K}$ of the element of a subgroup $K$.
Of course, when $K$ is the trivial subgroup $\{1_G\}$ then
this is easy as $\ket{K}=\ket{1_G}$. 
For this $K$ we obtain results
for abelian groups that may be of interest on their own right.
Throughout the section we assume that $G$ is a black box group
with unique encoding of elements with binary strings of
length $\ell$.

\subsection{Some basic applications}
\label{subsec:direct-appl}

\subsubsection{An exact version of the swap test}
\label{subsubsec:swap}

The swap test can be used to decide equality of two quantum states,
provided that they are either equal (up to a phase) or orthogonal 
to each other. In the usual setting, the states are given as input
and orthogonality is detected with probability $\frac{1}{2}$. This
suggests that the amplitude amplification technique of Brassard
and Hoyer (see Subsection~\ref{subsec:BraHoy}) can be used
to get an exact version. However, availability of the two states
as input is not sufficient for applying the technique. It rather
requires {\em procedures for creating} the two states. In that 
setting, one could directly combine the swap test with the amplitude
amplification to obtain an exact procedure. Interestingly, the
idea of swapping the two states conditionally 
suggests an interpretation as an instance 
of (the generalized version of) the hidden subgroup problem in
the two-element group $\Z_2$.
\begin{corollary}
\label{cor:swap}
Assume that there are two unitary procedures ${\cal P}_1$
and ${\cal P}_2$ mapping the $\ell$-qubit state $\ket{0}$
to $\ket{\psi_1}$ and to $\ket{\psi_2}$, respectively. Suppose further
that the states $\ket{\psi_1}$ and $\ket{\psi_2}$ are either
equal (up to a phase) or orthogonal. Then there is an exact
procedure that returns $\ket{0}$ if the two states are orthogonal
and $\ket{1}$ if they are equal. The procedure uses
$O(1)$ applications of ${\cal P}_1$, ${\cal P}_2$ and their inverses
and $O(\ell)$
elementary gates. 
\end{corollary}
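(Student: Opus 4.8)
The plan is to recognize the problem as an instance of the (generalized) hidden subgroup problem in the two-element group $\Z_2$ and then to invoke Theorem~\ref{thm:HSP} with $m=2$, $n=1$, $k=1$. First I would work on the $(2\ell)$-qubit space carrying two copies of the input register and define a quantum-state-valued function $f:\Z_2\rightarrow S(\C^{2^{2\ell}})$ by letting $\ket{f(a)}$ be the result of applying $\mathrm{SWAP}^a$ to $\ket{\psi_1}\otimes\ket{\psi_2}$; that is, $\ket{f(0)}=\ket{\psi_1}\ket{\psi_2}$ and $\ket{f(1)}=\ket{\psi_2}\ket{\psi_1}$. This is exactly the conditional-swap idea underlying the swap test, now repackaged as the oracle of a $\Z_2$-HSP.

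The crux is to check that this $f$ hides precisely the subgroup that encodes the answer. If $\ket{\psi_1}$ and $\ket{\psi_2}$ are equal up to a phase, say $\ket{\psi_2}=e^{\iota\phi}\ket{\psi_1}$, then both $\ket{f(0)}$ and $\ket{f(1)}$ equal $e^{\iota\phi}\ket{\psi_1}\ket{\psi_1}$, so the two values coincide on the nose (the common phase cancels) and $f$ hides $H=\Z_2$. If instead $\ket{\psi_1}\perp\ket{\psi_2}$, then $\braket{f(0)}{f(1)}=\braket{\psi_1}{\psi_2}\braket{\psi_2}{\psi_1}=\abs{\braket{\psi_1}{\psi_2}}^2=0$, so the two values are orthogonal and $f$ hides the trivial subgroup $H=\{0\}$. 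Thus the equal and orthogonal cases of the corollary correspond exactly to the two possible hidden subgroups of $\Z_2$.

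Next I would implement the oracle $U_f:\ket{a}\ket{0}\mapsto\ket{a}\ket{f(a)}$ within the stated budget: preparing $\ket{\psi_1}\ket{\psi_2}$ costs one call each to ${\cal P}_1$ and ${\cal P}_2$, and a $\mathrm{SWAP}$ of the two $\ell$-qubit registers controlled by the $\Z_2$-register is a cascade of $\ell$ Fredkin gates, i.e.\ $O(\ell)$ elementary gates, with the inverse costing the same. Running the algorithm of Theorem~\ref{thm:HSP} on this instance, and using that $m=2$ is prime, $n=1$, $k=1$, gives $O(nk)=O(1)$ calls to $U_f$ or its inverse and $O(n^2k)=O(1)$ applications of $QFT_2$, which is a single Hadamard; folding the $O(\ell)$ gates of each oracle call in, this yields $O(1)$ uses of ${\cal P}_1,{\cal P}_2$ and their inverses together with $O(\ell)$ elementary gates. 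Since Theorem~\ref{thm:HSP} outputs the hidden subgroup cleanly as a Hermite normal form matrix, here the $1\times 1$ matrix $(1)$ for $H=\Z_2$ and $(2)$ for $H=\{0\}$, a trivial reversible post-processing writes the required bit into the output register, namely $\ket{1}$ in the equal case and $\ket{0}$ in the orthogonal case.

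The step I expect to be the real content is verifying the exact hiding property, and in particular ensuring that the equal-up-to-phase case produces genuinely identical values $\ket{f(0)}=\ket{f(1)}$ rather than states that merely agree up to a phase. It is precisely the symmetric placement of $\ket{\psi_1}$ and $\ket{\psi_2}$ in the two registers that secures this, which is why I would use $f(a)=\mathrm{SWAP}^a\ket{\psi_1}\ket{\psi_2}$ rather than any one-register encoding; everything else is a direct application of Theorem~\ref{thm:HSP}.
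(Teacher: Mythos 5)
Your proposal is correct and follows essentially the same route as the paper: defining $\ket{f(a)}=\mathrm{SWAP}^a\ket{\psi_1}\ket{\psi_2}$ as a $\Z_2$-valued instance of the generalized HSP, implementing the oracle with ${\cal P}_1$, ${\cal P}_2$ and $\ell$ controlled swaps, and concluding by Theorem~\ref{thm:HSP}. The extra care you take in checking that the equal-up-to-phase case yields genuinely identical (not merely proportional) values is a worthwhile detail the paper only notes in passing.
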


\begin{proof}
We consider the quantum state valued function $f$ from $\Z_2$
to $\C^{2\ell}$ such that $\ket{f(0)}=\ket{\psi_1}\ket{\psi_2}$
and $\ket{f(1)}=\ket{\psi_2}\ket{\psi_1}$. Note that
$\ket{f(0)}$ and $\ket{f(1)}$ are either equal 
(if $\ket{\psi_1}=c\ket{\psi_2}$) or orthogonal. Also,
if the two states are equal then the subgroup hidden by $f$
is the entire group $\Z_2$ while in the other case
it is the trivial subgroup. A procedure computing
$\ket{x}\ket{f(x)}$ from $\ket{x}\ket{0}$ can
be built using ${\cal P}_1$
and ${\cal P}_2$ and $\ell$ controlled swap of qubits. 
The proof can be concluded by an application
of Theorem~\ref{thm:HSP}.
\end{proof}

We remark that as the HSP is in $\Z_2$, already
 the hidden subgroup algorithm of Brassard and
Hoyer~\cite{BraHoy} implies the result. 
Also, by unfolding the hidden subgroup algorithm,
the proof would give essentially the same circuit 
as a direct combination of the swap test and the 
amplitude amplification would result in. 

\subsubsection{Testing membership}

The swap test can be used to reduce
Task (ii) of Theorem~\ref{thm:groups} to Task (i).

\begin{corollary}
\label{cor:member}
Let $K$ be a subgroup of $G$ and assume that we have a unitary 
procedure ${\cal S}_K$ for creating the uniform superposition $\ket{K}$
of the elements of $K$. Then we can test membership of $u\in G$ in $K$
in time $\ell^{O(1)}$ by an exact quantum algorithm that uses $O(1)$
applications of the group oracles and ${\cal S}_K$ or the inverses
of these.   
\end{corollary}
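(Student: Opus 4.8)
The plan is to reduce membership testing to the exact swap test of Corollary~\ref{cor:swap} by comparing the uniform superposition $\ket{K}$ over $K$ with the uniform superposition over the left coset $uK$. The key observation is the following. Write $\ket{uK}=\frac{1}{\sqrt{\size{K}}}\sum_{x\in K}\ket{ux}$ for the uniform superposition over $uK$. If $u\in K$ then $uK=K$ and hence $\ket{uK}=\ket{K}$, whereas if $u\notin K$ then $uK$ and $K$ are distinct, hence disjoint, cosets, so $\ket{uK}$ and $\ket{K}$ have disjoint support and are therefore orthogonal. Thus the pair $(\ket{K},\ket{uK})$ is exactly of the ``equal or orthogonal'' type required by Corollary~\ref{cor:swap}, and an application of that corollary returns $\ket{1}$ precisely when $u\in K$ and $\ket{0}$ otherwise.

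To invoke Corollary~\ref{cor:swap} I need two clean unitary procedures ${\cal P}_1$ and ${\cal P}_2$ producing $\ket{K}$ and $\ket{uK}$ from $\ket{0}$. For ${\cal P}_1$ I simply take ${\cal S}_K$. For ${\cal P}_2$ I would first apply ${\cal S}_K$ to obtain $\ket{K}$ and then apply the left-translation map $L_u:\ket{x}\mapsto\ket{ux}$ to the group register. The point to get right is that $L_u$ must be realized as a genuine unitary on the $\ell$-qubit register that leaves no ancilla entangled with the output; otherwise the swap test would be applied to garbage-laden states and could misreport.

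First I would implement $L_u$ by a compute--swap--uncompute scheme. Prepare the fixed strings $u$ and $u^{-1}$ in two ancilla registers (the former by classical bit flips, the latter using the inverse oracle $O(1)$ times), then apply the multiplication oracle to write $\ket{ux}$ into a fresh register from $\ket{x}$, swap that register with the one holding $x$, and finally uncompute the leftover $\ket{x}$ by applying the inverse of the multiplication oracle with first factor $u^{-1}$, since $u^{-1}\cdot(ux)=x$. After unpreparing the ancillae this leaves $\ket{ux}$ cleanly in the group register, using $O(1)$ calls to the group oracles and $O(\ell)$ elementary gates. With ${\cal P}_1={\cal S}_K$ and ${\cal P}_2=L_u\circ{\cal S}_K$ in hand, Corollary~\ref{cor:swap} yields the exact membership test with $O(1)$ applications of ${\cal S}_K$, the group oracles, and their inverses, and $\ell^{O(1)}$ total time.

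The main obstacle is precisely the cleanliness of $L_u$: producing $\ket{uK}$ as an unentangled pure state on the group register, rather than the entangled $\frac{1}{\sqrt{\size{K}}}\sum_{x\in K}\ket{x}\ket{ux}$ with the input still present, is what forces the uncomputation step and, with it, the use of $u^{-1}$. I would also confirm that $u^{-1}$ stays within the $O(1)$ oracle budget; this is immediate from the inverse oracle among the group oracles, and in its absence the multiple-of-the-order trick from Section~\ref{sec:prelim} computes $u^{-1}$ from the fixed $u$ as a one-time preprocessing.
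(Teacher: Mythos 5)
Your proposal is correct and follows exactly the paper's route: the paper's proof is the one-line application of Corollary~\ref{cor:swap} with ${\cal P}_1={\cal S}_K$ and ${\cal P}_2=\mu_u\circ{\cal S}_K$ where $\mu_u(v)=uv$, which is precisely your reduction to the ``equal or orthogonal'' swap test on $\ket{K}$ versus $\ket{uK}$. Your additional care about implementing the left translation cleanly via compute--swap--uncompute with $u^{-1}$ (obtainable from the group oracles or from a power of $u$ as noted in Section~\ref{sec:prelim}) is a detail the paper leaves implicit but is entirely consistent with it.
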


In the statement, the subgroup $K$ is hidden 
behind the procedure ${\cal S}_K$. When we apply 
this corollary, ${\cal S}_K$ will be actually 
implemented by a procedure performing Task~(i)
of Theorem~\ref{thm:groups} for $K$ in place of $G$.
That procedure uses (generators for) $K$ as input.

\begin{proof}
We apply Corollary~\ref{cor:swap} with ${\cal P}_1={\cal S}_K$ and
${\cal P}_2=\mu_u\circ {\cal S}_K$ where $\mu_u(v)=uv$.
\end{proof}

\subsubsection{Presentation of an abelian factor}
\label{subsubsec:present}

The following corollary provides an algorithm to efficiently compute 
presentations and decomposition of abelian factors using the exact 
hidden subgroup algorithm.

\begin{corollary}
\label{cor:present}
Let $K$ be a normal subgroup of $G$ and assume that there 
is unitary procedure ${\cal S}_K$ that (on input $\ket{0}$)
computes $\ket{K}$.
Suppose further that
we are given elements $u_1,\ldots,u_n\in G$
such that $u_i^{m^k}=1$ for some integer $k\geq 0$;
$[u_i,u_j]\in K$ ($i,j=1,\ldots,n$);
and $G$ is generated by $u_1,\ldots,u_n$ and $K$. 
Then there are exact quantum algorithms that,
in time $(\ell\log m)^{O(1)}$, compute a presentation
of $G/K$ in terms of the generators $u_1K,\ldots,u_nK$
and an isomorphism $G/K\cong 
\Z_{m_1}\oplus\ldots\oplus \Z_{m_{n'}}$ with
$m_i|m_{i-1}$ for $1<i\leq n'$. The procedures 
apply ${\cal S}_K$, $QFT_m$, the group oracle(s) and the 
inverses of these. The aforementioned isomorphism will be given by listing
$n'$, the sequence $m_1,\ldots,m_{n'}$ and
an $n'$ by $n$ matrix $(\alpha_{ij})$ such that
with $z_i=\prod_{j=1}^n u_j^{\alpha_{ij}}$ ($i,1,\ldots,n'$),
we have $z_i^{m_i}\in K$ but  
$\prod_{i=1}^{n'}z_i^{\beta_i}\in K$ with $0\leq \beta_i<m_i$
implies that all $\beta_i=0$. 
\end{corollary}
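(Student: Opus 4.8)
The plan is to reduce the entire statement to a single application of Theorem~\ref{thm:HSP}. Since $K\lhd G$, the quotient $G/K$ is a group generated by the images $\bar u_i=u_iK$, and the hypothesis $[u_i,u_j]\in K$ says exactly that these generators commute, so $G/K$ is abelian; moreover $\bar u_i^{m^k}=1$ because $u_i^{m^k}=1$. Hence $e_i\mapsto \bar u_i$ extends to a surjective homomorphism $\phi:\Z_{m^k}^n\twoheadrightarrow G/K$, so that $G/K\cong \Z_{m^k}^n/H$ with $H=\ker\phi$. A presentation of $G/K$ on the generators $\bar u_1,\ldots,\bar u_n$ is then precisely a generating set of relations for $H$, that is, by the discussion in Subsection~\ref{subsec:HNF-SNF}, the columns of the Hermite normal form matrix $H_H$ of the lattice $L_H\subseteq\Z^n$. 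So it suffices to compute $H$ with the hidden subgroup algorithm and then post-process by a Smith normal form computation.

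To invoke Theorem~\ref{thm:HSP} I would exhibit a quantum-state-valued function on $\Z_{m^k}^n$ that hides $H$, namely the coset-superposition function $\ket{f(x)}=\ket{g(x)K}:=\frac{1}{\sqrt{\size{K}}}\sum_{w\in K}\ket{g(x)w}$, where $g(x)=\prod_{i=1}^n u_i^{x_i}$ is computed in $G$ in a fixed order from the representative $0\le x_i<m^k$. Because $K\lhd G$ and the $\bar u_i$ commute, $g(x)K=\phi(x)$ depends only on $x+H$, so $\ket{f(x)}=\ket{f(y)}$ whenever $x+H=y+H$; and if $x+H\neq y+H$ then $g(x)K$ and $g(y)K$ are disjoint subsets of $G$, so by the unique encoding of elements the two superpositions are orthogonal. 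The oracle $U_f$ is assembled from the available procedures: first compute $\ket{x}\ket{0}\mapsto\ket{x}\ket{g(x)}$ by repeated squaring with the group-multiplication oracle ($O(nk\log m)$ multiplications); then append $\ket{K}$ using ${\cal S}_K$; next left-multiply the superposition by $g(x)$ with a constant number of uses of the multiplication oracle and its inverse, turning $\ket{K}$ into $\ket{g(x)K}$ in place; and finally uncompute the register $\ket{g(x)}$ by running the first step in reverse, leaving $\ket{x}\ket{f(x)}$. This is the step I expect to carry the real work: getting $U_f$ exactly right as a clean unitary — in particular the in-place left-multiplication and the uncomputation of $g(x)$ — and confirming the orthogonality of distinct coset superpositions; everything afterwards is standard bookkeeping.

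With $U_f$ in hand, Theorem~\ref{thm:HSP} returns $H$ as the Hermite normal form matrix $H_H$, whose columns are the defining relations of $G/K$ on $\bar u_1,\ldots,\bar u_n$, giving the required presentation. For the decomposition I would then compute the Smith normal form $S=LH_HR$ together with its multipliers, in deterministic polynomial time (Subsection~\ref{subsec:HNF-SNF}). Since $L_H\supseteq m^k\Z^n$, every diagonal entry $d_i$ of $S$ divides $m^k$; discarding the entries equal to $1$ and reindexing so that divisibility runs in the direction required by the statement ($m_i\mid m_{i-1}$, the reverse of the usual Smith ordering) yields $n'$ and $m_1,\ldots,m_{n'}$, hence $G/K\cong\bigoplus_{i=1}^{n'}\Z_{m_i}$. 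The left multiplier $L$ expresses the new generators in terms of $\bar u_1,\ldots,\bar u_n$ (Subsection~\ref{subsec:HNF-SNF}), which, up to the standard inverse/transpose convention, is exactly the matrix $(\alpha_{ij})$ with $z_i=\prod_j u_j^{\alpha_{ij}}$ satisfying $z_i^{m_i}\in K$ and the stated independence of the $\bar z_i$. The classical post-processing is polynomial in the bit size of $H_H$, and the quantum cost is dominated by the calls inside Theorem~\ref{thm:HSP} to $U_f$ and to $QFT_m$, so the whole procedure runs in time $(\ell\log m)^{O(1)}$.
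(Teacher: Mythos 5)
Your proposal is correct and follows essentially the same route as the paper: define the quantum-state-valued function $\ket{f(x)}=\ket{\bigl(\prod_i u_i^{x_i}\bigr)K}$ on $\Z_{m^k}^n$, observe that it hides the relation lattice of $G/K$ with respect to $u_1K,\ldots,u_nK$, apply Theorem~\ref{thm:HSP}, and extract the cyclic decomposition from the Smith normal form and its left multiplier. You supply more detail than the paper on assembling $U_f$ and on orthogonality of distinct coset superpositions; the only small point the paper addresses that you skip is that a suitable exponent $k\leq\ell$ is not given but can be found classically by at most $\ell$ trials of $u^{m^t}=1$.
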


\begin{proof}
As the order of $u$ is at most $2^\ell$, there
exist an integer $k$ between $0$ and $\ell$
such that $u^{m^k}=1$.
(Let $\alpha_p$ be the multiplicity of the prime
$p$ in the order of $u$. Then $\alpha_p\leq \ell$
and if $k\geq \alpha_p$ for every prime factor $p$ of $m$,
we have $u^{m^k}=1$.)  We can find such a $k$ using
at most $\ell$ trials. Note that these trials can also be used to decide
if the order of $u$ is a divisor of a power of $m$.

Consider the function $f:\Z_{m^k}^n\rightarrow \C G$ defined as
$$f(x_1,\ldots,x_n)=\Big |\Big(\prod_{i=1}^nu_i^{x_i}\Big)K\Big\rangle.$$ 
The function $f$ hides the subgroup $H$ of $\Z_m^n$ 
such that $L_H$ is the lattice of relations for a presentation of $G/K$
in terms of generators $u_iK$ $(i=1,\ldots,n)$. We apply
Theorem~\ref{thm:HSP} to compute $H$. The result will actually be
an Hermite normal form matrix whose columns are a basis for $L_H$.
The Smith normal form of that matrix and the left multiplier
give the stated isomorphism.
\end{proof}

We remark that Corollary~\ref{cor:present} captures
many tasks, such as finding orders of elements of $G$,
finding generators for cyclic subgroups and solving the
discrete logarithm problem for pairs of elements of $G$.
It could also serve as an alternative method for testing
membership in $K$.

\subsection{Computing the uniform superposition of group elements}
\label{subsec:superpos}

In this part, we show how to perform Task (i) of Theorem~\ref{thm:groups} 
provided that we have a polycyclic series 
$\{1\}=G_0<G_1<\ldots<G_h=G$ given
by elements $g_1,\ldots,g_h$
such that $G_i$ is a generated by $g_1,\ldots,g_i$ ($i=1,\ldots,h$).
We also require that the order of each factor group $G_{i}/G_{i-1}$
is a divisor of $m$. This will be ensured when we construct
the polycyclic series. We use Watrous's method to
build a "pyramid" of superpositions over the subgroups in the series. The
key step is an exact version of Watrous's technique
described in Section 3.2 of \cite{Watrous} that,
from $s$ copies of the uniform superposition $\ket{G_i}$
of elements of $G_i$ computes $s-1$ copies of $\ket{G_{i+1}}$.
Then creating $\ket{G}$ goes as follows. We start with $h+1$
copies of $\ket{1}$ and from these we compute $h+1-i$
copies of $\ket{G_i}$ ($i=1,\ldots,h$).

Below we describe the key step. As a tool,
we use the following. 

\begin{lemma}
\label{lem:gcd}
Let $z_1,\ldots,z_s$ be integers between $0$ and $m-1$. Then
there is a
 deterministic algorithm that 
in time $(s\log m)^{O(1)}$
finds integers $u_1,\ldots,u_{s-1}$ such that
$\gcd(u_1z_1+u_2z_2+\ldots+u_{s-1}z_{s-1}+z_s,m)
=\gcd(z_1,\ldots,z_s,m)$.
\end{lemma}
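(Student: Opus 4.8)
The plan is to reduce the $s$-term statement to its two-term special case and then to solve that case without ever factoring $m$.

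First I would fold the $z_i$ in one at a time. Put $w_s = z_s$ and, for $i = s-1, s-2, \ldots, 1$, choose a coefficient $u_i$ and set $w_i = u_i z_i + w_{i+1}$, reducing modulo $m$ at each step (this changes none of the gcd's with $m$ and keeps all integers bounded). Unwinding gives $w_1 = u_1 z_1 + \cdots + u_{s-1}z_{s-1} + z_s$, with the coefficient of $z_s$ permanently equal to $1$, exactly as the lemma demands. If at each fold I can guarantee
$$\gcd(w_i, m) = \gcd(z_i, w_{i+1}, m),$$
then, using the inductive hypothesis $\gcd(w_{i+1},m) = \gcd(z_{i+1},\ldots,z_s,m)$ together with $\gcd(z_i, w_{i+1}, m) = \gcd\bigl(z_i, \gcd(w_{i+1},m)\bigr) = \gcd(z_i,\ldots,z_s,m)$, the final value satisfies $\gcd(w_1, m) = \gcd(z_1,\ldots,z_s,m)$. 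Thus everything reduces to the two-variable task: given integers $a,b$ and the modulus $m$, find $u$ with $\gcd(ua + b, m) = \gcd(a, b, m)$.

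For the two-variable task I would first divide out the common factor. Let $d = \gcd(a,b,m)$ and write $a = d\tilde a$, $b = d\tilde b$, $m = d\tilde m$, so $\gcd(\tilde a, \tilde b, \tilde m) = 1$ and $\gcd(ua+b,m) = d\,\gcd(u\tilde a + \tilde b, \tilde m)$; it suffices to find $u$ with $\gcd(u\tilde a + \tilde b, \tilde m) = 1$. The genuinely delicate point is that we may not factor $\tilde m$, so we cannot directly locate the primes where $\tilde a$ is or is not invertible. The device that sidesteps this is a gcd-based splitting: repeatedly replacing a running modulus $M$ (initially $\tilde m$) with $M/\gcd(M,\tilde a)$ until the gcd becomes $1$ strips from $\tilde m$ every prime dividing $\tilde a$, halts in $O(\log \tilde m)$ steps (each step at least halves $M$), and yields the largest divisor $m''$ of $\tilde m$ coprime to $\tilde a$, with complementary part $m' = \tilde m/m''$.

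Now $\tilde a$ is invertible modulo $m''$, so I choose $u \equiv \tilde a^{-1}(1 - \tilde b) \pmod{m''}$, forcing $u\tilde a + \tilde b \equiv 1 \pmod{m''}$ and hence $\gcd(u\tilde a + \tilde b, m'') = 1$. For the complementary part, every prime $p \mid m'$ divides $\tilde a$, and since $p\mid\tilde m$ as well, $\gcd(\tilde a, \tilde b, \tilde m) = 1$ forces $p \nmid \tilde b$; therefore $u\tilde a + \tilde b \equiv \tilde b \not\equiv 0 \pmod p$ for any $u$, so $\gcd(u\tilde a + \tilde b, m') = 1$ automatically. As $\gcd(m',m'')=1$, the representative of $\tilde a^{-1}(1-\tilde b)$ modulo $m''$ gives $\gcd(u\tilde a + \tilde b, \tilde m) = 1$, as wanted. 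Collecting the cost, the two-variable routine uses one $\gcd$, one extended-Euclid inversion, and an $O(\log m)$-round splitting, all polynomial in $\log m$, and the fold calls it $s-1$ times, for total running time $(s\log m)^{O(1)}$. The only subtle ingredient is the $\tilde m = m'm''$ separation and residue choice carried out without the factorization; once the splitting is in place correctness is immediate, and the fixed coefficient $1$ on $z_s$ is obtained for free by seeding the accumulation with $w_s=z_s$.
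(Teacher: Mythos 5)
Your proposal is correct, and while the reduction to the two-variable case is identical to the paper's (fold the $z_i$ in one at a time, using $\gcd(z_1,\ldots,z_{i-1},u_iz_i+w_{i+1},m)=\gcd(z_1,\ldots,z_s,m)$), your solution of the core task $\gcd(u\tilde a+\tilde b,\tilde m)=1$ is genuinely different. The paper argues via a sieve: it calls $u$ ``bad'' modulo a prime $p\mid \tilde m$ if $p\mid u\tilde a+\tilde b$, lists all prime divisors of $\tilde m$ below a threshold $S>\log_2 m$ by trial division, fixes a good residue modulo each of them by Chinese remaindering to get $u_0$ and a modulus $m'$, and then observes that among the $S$ candidates $u_t=u_0+tm'$ each large prime divisor of $\tilde m$ can spoil at most one, so some $u_t$ survives and is found by testing all $S$ gcds. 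You instead split $\tilde m$ multiplicatively, via iterated division by $\gcd(M,\tilde a)$, into the part $m''$ coprime to $\tilde a$ and the complementary part $m'$ all of whose primes divide $\tilde a$; you then force the value to be $1$ modulo $m''$ by inverting $\tilde a$ there, and note that modulo $m'$ the value is automatically $\equiv\tilde b$, which is coprime to $m'$ since $\gcd(\tilde a,\tilde b,\tilde m)=1$. (The degenerate case $\tilde a=0$ still works: then $m''=1$ and the argument reduces to $\gcd(\tilde b,\tilde m)=1$.) Your route yields an explicit closed-form coefficient with $O(\log m)$ gcd computations and one modular inversion, avoiding both the prime listing and the search over $S$ candidates; the paper's sieve avoids the coprime-splitting step but pays for it with the candidate search. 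Both are deterministic and run in time polynomial in $\log m$ per fold, so either establishes the lemma. One cosmetic point: since you reduce the accumulator modulo $m$ at each fold, the unwound sum $u_1z_1+\cdots+u_{s-1}z_{s-1}+z_s$ differs from your final $w_1$ by a multiple of $m$, which is harmless for the gcd but worth a sentence to make the stated identity literal.
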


\begin{proof}
There is an easy reduction to the case $s=2$. Indeed,
having a method for $s=2$ we can find $u_{s-1}$ such that
$\gcd(u_{s-1}z_{s-1}+z_s,m)=\gcd(z_{s-1},z_s,m)$ and then
$\gcd(z_1,\ldots,z_{s-2},u_{s-1}z_{s-1}+z_s,m)=\gcd(z_1,\ldots,z_{s-1},z_s,m)$.
We proceed with finding an appropriate coefficient $u_{s-2}$, and so
on. 

To solve the case $s=2$, given $z_1,z_2$ and $m$, we need
to find $u$ such that $\gcd(uz_1+z_2,m)=\gcd(z_1,z_2,m)$.
Dividing by $\gcd(z_1,z_2,m)$, we obtain the case when 
$\gcd(z_1,z_2,m)=1$ and when we are looking for $u$ such that
$\gcd(uz_1+z_2,m)=1$. For each prime divisor $p$ of $m$
we say that $u$ is "bad" modulo $p$ if $p|uz_1+z_2$, otherwise
$p$ is "good" modulo $p$. Obviously, goodness and badness
depend only on the residue class of $u$ modulo $p$ and
for every $p$ there are $p-1$ "good" residue classes.
Let $S$ be a positive integer to be determined
later and let $P$ be the set of the prime divisors
of $m$ less than $S$. These primes can be listed 
and their product $m'$ can be computed
in time $(S\log m)^{O(1)}$. For each prime in $P$
pick a "good" residue class $u_p$ modulo $p$ and by Chinese
remaindering compute an integer $0\leq u_0<m'$ such that 
$u_0\equiv u_p$ modulo $p$ for every $p\in P$. 
Then for every integer $t$, $u_t=u_0+tm'$ is good modulo every prime in $P$.
We take the sequence $u_0,\ldots,u_{S-1}$ and compute
$\gcd(u_tz_1+z_2,m)$ for $t=0,\ldots,S-1$. Clearly, for each prime
$p>S$, the sequence does not contain two members that
are from the same residue class modulo $p$. Therefore, 
at most one of the $u_t$s is bad modulo $p$. Thus, if $S>\log_2 m$
then there is at least one $t$ such that $u_t$ is good modulo
every prime divisor of $m$. We take that $u_t$ (reduced modulo $m$).
\end{proof}

Let $N$ be a subgroup of $G$ and let $u\in G$ such that
$u^{-1}Nu=N$ and $u^m\in N$. Let $K$ be the subgroup generated by $u$ and $N$.
Then $N\lhd K$.
Assume that we have $s$ copies of the state
$\ket{N}=\frac{1}{\sqrt{\size{N}}}\sum_{v\in N}\ket{v}$.
From these, we shall make $s-1$ copies of the state 
$\ket{K}=\frac{1}{\sqrt{\size{K}}}\sum_{w\in K}\ket{w}$.
We use Watrous's method with two modifications.
The first thereof is that we take the multiple $m$ instead
of the order of $u$ modulo $N$ so that
we do not need to use any Fourier transform other than $QFT_m$.
The second modification is
that, as we want an exact procedure, we have to be prepared
to handle some "degenerate" cases that occur with small
probability. (Watrous discarded these and used repetition.
However, the possibility of a variant that also works in the 
degenerate cases is mentioned in~\cite{Watrous}.)

Notice that $\ket{K}=\frac{1}{\sqrt{m}}\sum_{x\in \Z_m}\ket{u^xN}$.
We apply the Fourier sampling described in 
Subsection~\ref{subsec:Fourier-sampling} for
the function 
$f:\Z_m\rightarrow \C G$ given as $f(x)=\ket{u^xN}$. This time
we are not interested
which subgroup is hidden by $f$, our objectives are the states
$\ket{\gamma_y}$ in the second register of $\ket{\Psi}$
in~(\ref{eq:Fourier-sampling}) and in~(\ref{eq:Fourier-gamma}). 
In this special case, 
$$\ket{\gamma_y}=\frac{1}{\sqrt m}\sum_{x\in \Z_m}\omega^{xy}\ket{u^xN}.$$ 
(Here we go back to~(\ref{eq:Fourier-ori}) and do not use the decomposition
of $\ket{\gamma_y}$ by the cosets of the hidden subgroup.)
Another minor difference from Subsection~\ref{subsec:Fourier-sampling}
is that we compute $f$ in a register initially $\ket{N}$
rather than $\ket{0}$.

By applying Fourier sampling on the available $s$ copies of $\ket{N}$,
we obtain 
$$(m)^{-s/2}\sum_{y_1,\ldots,y_s=0}^{m-1}
\ket{y_1}\ldots\ket{y_s}\ket{\gamma_{y_1}}\ldots\ket{\gamma_{y_s}}.$$

We consider the terms of this sum. 
For each $s$-tuple $(y_1,\ldots,y_s)$
we will change the first $s-1$ states to
$\ket{\gamma_0}=\ket{K}$ using Watrous's trick.
The key
fact behind is that $\ket{\gamma_y}$ is an eigenstate
with eigenvalue $\omega^{-y}$ 
for the action of $u$ by multiplication. Based on this,
it is easy to show that
if we have a pair $\ket{\gamma_y}\ket{\gamma_z}$
of such states then multiplying the content of the
first part by the $t$th power of the content 
of the second part, the effect can be interpreted as
the first part $\ket{\gamma_y}$ remains unchanged while the second part
becomes $\ket{\gamma_{z-ty}}$, where $z-ty$ is understood modulo $m$.
We use this first to arrange that 
the index $y_s$ of the last state $\gamma_{y_s}$ becomes a generator
for the subgroup $L$ of $\Z_m$ generated by $y_1,\ldots,y_s$. To this end, 
using the method of Lemma~\ref{lem:gcd}, we 
find integers $u_1,\ldots,u_{s-1}$ such that
$\gcd(u_1y_1+\ldots+u_{s-1}y_{s-1}+y_s,m)=\gcd(y_1,\ldots,y_s,m)$.
Then $u_1y_1+\ldots+u_{s-1}y_{s-1}+y_s$ modulo $m$ is a generator
for $L$.  We multiply (the content of) 
$\ket{\gamma_{y_1}}$ by the $-u_1$th power of $\ket{\gamma_{y_s}}$,
then $\ket{\gamma_{y_2}}$ the $-u_2$th power of 
by an appropriate power of (the new)
$\ket{\gamma_{y_s}}$, and so on. Eventually $y_s$ becomes
the generator $u_1y_1+\ldots+u_{s-1}y_{s-1}+y_s$ modulo $m$.
Then $y_i\equiv t_iy_s$ modulo $m$ 
for some integer $t_i$ ($i=1,\ldots,s-1$), 
and multiplying
$\ket{\gamma_{y_s}}$ with the $t_i$th power of 
$\ket{\gamma_{y_i}}$ changes $\ket{\gamma_{y_i}}$ to 
$\ket{\gamma_{y_i-t_iy_s}}=\ket{\gamma_{0}}=\ket{K}$.

\subsection{Constructing a polycyclic series}

We start with testing if the orders of the generators
are divisors of some power of $m$.
(The proof of Corollary~\ref{cor:present} includes a straightforward 
classical method doing this.) If one of the tests fails,
then no further computation is needed: we can return that the order
of $G$ does not satisfy the required property. 
When a new element occurs, we can perform this test and 
exit in case of failure. We
build a polycyclic series from the bottom. 

Assume that $N\lhd G$ and we already have a polycyclic series
of $N$ with factors of order dividing $m$. 
Initially, $N$ is the trivial subgroup. By the previous
subsection, we have an efficient method for Task
(i) of Theorem~\ref{thm:groups} with $N$ 
in place of $G$ and one can use Corollary~\ref{cor:member}
(or Corollary~\ref{cor:present})
to test membership in $N$. Our first goal is to find a normal subgroup
$K\lhd G$ such that $K/N$ is abelian. To this end, we use
a black box method from Luks's paper \cite{Luks}.
Let $y_1,\ldots,y_{r'}$ be those of the given generators
for $G$ that are not in $N$. We may assume that $r'\geq 1$ as otherwise
$G=N$. Put $x=y_1$. We collect a list $L$ of conjugates of $x$ 
until we find two elements $u,v\in L$ such
that $[u,v]\not\in N$ or the subgroup $K$ generated by $L$ and
$N$ stabilizes. Initially $L$ contains only $x$. When a new conjugate
$v$ of $x$ is added to the list then we check if $[u,v]\in N$
for every $u$ already in $L$. If an $u$ is found
such that $[u,v]\not\in N$ then we replace
$x$ with $[u,v]$. We test if the order of the new $x$ 
is a divisor of some power of $m$ and stop if not.
Otherwise, we restart collecting conjugates of $x$.

Stabilization can be decided as follows.
Notice that $K/N$ is abelian
and $L$ directly extends the polycyclic series of
$N$ to $K$. The series can also be refined to have
factors of order dividing $m$ in an obvious way.
Based on this, there is an efficient method for Task (i) of 
Theorem~\ref{thm:groups} for $K$ as well, whence we can
test membership in $K$ like in $N$. 
Using that, for every $u$ in the list $L$ we test if $u^{y_i}\in K$
for every $y_i$. If this is the case, then $K\lhd G$.
Otherwise, we can add a new conjugate $z^{y_i}$ of $x$ to $L$.
When $K$ stabilizes, we replace $N$ with $K$ 
and repeat the procedure outlined above. As $\size{L}\leq
\log_2\size{K/N}\leq \ell$, either $K$ stabilizes or two 
conjugates of $x$ is found with commutator not in $N$ in at most 
$\log_2 G\leq \ell$ rounds.

If $G/N$ is solvable and $N<G$ then one eventually finds
a normal subgroup $K$ of $G$ properly containing $N$
because if $x\in G^{(i)}$ for some $i$ and if $u,v$
are conjugates of $x$ then $[u,v]\in [G^{(i)},G^{(i)}]=G^{(i+1)}$.
Therefore, to detect non-solvability, we keep track how many times
the element $x$ has been updated. If it has happened
more than $\ell$ times, then
we can stop and conclude that $G$ is not solvable.
%(It may be even $O(\log \ell)$.)

\subsection{The order, testing membership and abelian factors}

The order of $G$ is the product of the orders of the factors
$G_{i+1}/G_i$ of a polycyclic series $\{1\}=G_0<G_1\ldots<G_h=G$. 
The order of a factor can be computed by the algorithm of
Corollary~\ref{cor:present}. Testing membership
in a subgroup $N\leq G$ can be done by 
building a polycyclic series of $N$ and then using
Corollary~\ref{cor:member}. (For testing membership of $x$
in $G$ itself, we replace $N$ with $G$ and $G$ with the group
generated by $x$ and $G$.) Similarly,
decomposing an abelian factor
$G/N$ can be done by constructing first
a polycyclic series of $N$ and then applying 
Corollary~\ref{cor:present}.

\subsection{The derived series}
We first show how to compute the commutator subgroup $G'$.
The next to last element in a polycyclic series is
a normal subgroup $N\lhd G$ such that $G/N$ is cyclic. The data structure
for the polycyclic series algorithm includes an element $g$ such that 
$g$ and $N$ generate $G$.
By recursion, we start with computing the commutator subgroup $N'$. 
As $N'$ is a characteristic subgroup of $N$, we have $N'\lhd G$. 
We can also find a polycyclic series
for $N'$ which enables us computing $\ket{N'}$. We take the iterated
commutators of the generators for $N$ with $g$ until the subgroup
generated by $N$ and these commutators stabilizes. This subgroup
will be $G'$.
To check stabilization,
we use Corollary~\ref{cor:member} to test
membership in the intermediate subgroups.

By recursion, we compute the derived series of $G'$ which extends
to that of $G$ in the obvious way.

\paragraph*{Acknowledgments.} The authors are grateful
to Lajos R\'onyai, to Igor Shparlinski and to an anonymous referee
 for their helpful comments and suggestions.
The research of the second author was supported by the Hungarian Ministry 
of Innovation and Technology NRDI Office within the framework of the Artificial
Intelligence National Laboratory Program.

\bibliographystyle{alpha}
\bibliography{myrefs}

\end{document}